
\documentclass[
  aps,
  prl,
  reprint,
  superscriptaddress,
  nofootinbib,
  longbibliography,
  floatfix
]{revtex4-2}

\usepackage{newtxtext,newtxmath}

\usepackage{graphicx}
\usepackage{dcolumn}
\usepackage{bm}
\usepackage{hyperref}

\usepackage{amsmath}
\usepackage{xcolor}
\usepackage{multirow,booktabs}

\usepackage{caption}
\usepackage{ragged2e}  
\captionsetup[figure]{justification=justified, singlelinecheck=false}

\usepackage{subcaption}
\usepackage{comment}


\newcounter{SI}
\newcommand*\SI%
{%
\ifnum\theSI>0%
{SI}\fi%
\ifnum\theSI=0%
{Supplementary Information (SI) \cite{supp} \setcounter{SI}{1}}\fi%
}

\begin{document}

\preprint{APS/123-QED}

\title{Fundamental Scaling Constraints for Equilibrium Molecular Computing}

\author{Erin Crawley$^{\S,}$}
\email{erincrawley@g.harvard.edu}
\affiliation{School of Engineering and Applied Sciences, Harvard University, Cambridge, Massachusetts 02138, USA} 
\affiliation{Department of Physics, Harvard University, Cambridge, Massachusetts 02138, USA}

\author{Qian-Ze Zhu$^{\S,}$}
\email{qianzezhu@g.harvard.edu}
\affiliation{School of Engineering and Applied Sciences, Harvard University, Cambridge, Massachusetts 02138, USA} 

\author{Michael P. Brenner}
\email{brenner@seas.harvard.edu}
\affiliation{School of Engineering and Applied Sciences, Harvard University, Cambridge, Massachusetts 02138, USA} 
\affiliation{Department of Physics, Harvard University, Cambridge, Massachusetts 02138, USA}

\date{\today}

\begin{abstract}
\begin{description}
\item[\S] These authors contributed equally to this work.
\end{description}

Molecular computing promises massive parallelization to explore solution spaces, but so far practical implementations remain limited due to off-target binding and exponential proliferation of competing structures. Here, we investigate the theoretical limits of equilibrium self-assembly systems for solving computing problems, focusing on the directed Hamiltonian Path Problem (HPP) as a benchmark for NP-complete problems. The HPP is encoded via particles with directional lock-key patches, where self-assembled chains form candidate solution paths. We determine constraints on the required energy gap between on-target and off-target binding for the HPP to be encoded and solved. We simultaneously examine whether components with the required energy gap can be designed. Combining these results yields a phase diagram identifying regions where HPP instances are both encodable and solvable. These results establish fundamental upper bounds on equilibrium molecular computation and highlight the necessity of non-equilibrium approaches for scalable molecular computing architectures.

\end{abstract}

\maketitle

Frontiers of computing are expanding to tackle larger problems and broader application domains. Beyond the remarkable success of silicon-based computing, there is interest in how alternative physical systems can provide novel approaches to computation and information processing. While quantum computing has made tremendous progress in recent years, achieving improvements over classical approaches \cite{arute_quantum_2019} and major advances in quantum error correction \cite{google_quantum_ai_suppressing_2023, bluvstein_logical_2024}, there has been comparatively less progress in molecular computing. However, molecular computing has innate advantages: molecular components – such as self-assembling particles or interacting polymers – can collectively explore vast configuration spaces in parallel \cite{reif1998paradigms, garzon2002biomolecular, winfree2004dna, ezziane2005dna}. At the same time, biology itself offers an existence proof for molecular computers: biological systems demonstrate that collections of molecules paired with directed energy sources can solve complex computational problems. 

Here, we consider the Directed Hamiltonian Path Problem (HPP) as a benchmark problem for molecular computing. The HPP is a canonical NP-complete problem that asks whether there exists a path in a directed graph that starts and ends at designated nodes and visits all nodes exactly once \cite{garey2002computers}. It is a well-posed problem where solving large instances is challenging for conventional silicon-based hardware. 

Leveraging molecular computers for NP-complete problems is an old idea; Adleman demonstrated that programmed DNA hybridization reactions could solve the HPP for a 7-node graph \cite{adleman1994molecular}. However, Adleman's idea has since stagnated beyond proof-of-principle demonstrations, owing to key scaling limitations \cite{ezziane2005dna, kurtz1996active}. As the problem size ($N$) increases, the solution space expands combinatorially -- yielding an exponential number of competing structures -- while suppressing crosstalk and maintaining the requisite binding specificity becomes prohibitively difficult \cite{kurtz1996active, murugan2015undesired, huntley_infocap}. This fatally hinders reliable computations as the problem size increases ($N\to\infty$).  Most work extending Adleman's initial ideas focuses on computation with biological molecules, including enzymes and proteins, which likewise suffer from poor binding specificity \cite{smith11996dna,roweis1998sticker, faulhammer2000molecular, hug2001strategies,bar2002protein}. 

Yet, the key question is not whether {\sl existing} biological molecules can solve difficult computational problems. Rather, it is whether there exists {\sl any} molecular design for the underlying molecules that simultaneously circumvents the exponential growth of competing structures and avoids the crosstalk catastrophe that limits the computational ability of molecular systems. 

Recent convergence of capabilities in synthetic biology, protein engineering, and nanoscale manufacturing now provide practical technologies to enable more refined approaches.
In principle, such a design could operate in or out of equilibrium, with the only constraint being that it can be designed with existing components. As an example, one may modify the information carrying model with the binding characteristics of components \cite{rogers2015programming,mcmullen2021dna,king2024programming}. Alternatively, the binding characteristics could change depending on the local environment of a bound particle \cite{zhu2024proofreading, evans2024designing, metson2025designing}.

Here, we consider a subset of this general problem and ask whether it is possible to design molecular components which can solve the HPP when the system is constrained to be in equilibrium. We focus on two competing requirements: (1) that a molecular system can encode and solve the target problem, and (2) that the necessary interactions are designable with realistic components. These simultaneous constraints are in tension, and we demonstrate that equilibrium assembly is only possible in a small sliver of the designable phase space.

In contrast to Adleman's pure DNA strand encoding of this problem, we encode the HPP as a self-assembling system of patchy particles: each edge in the directed graph corresponds to a unique particle species, and each particle has directional interacting patches corresponding to the source and target nodes of that edge (Fig.~\ref{fig:1}(a,b)). Two particle species are designed to interact strongly if the target node of one matches the source node of the other, and weakly otherwise to model crosstalk, as shown in an example interaction matrix in Fig.~\ref{fig:1}(c). 

Self-assembled linear chains of interacting ``edge'' particles represent candidate paths through the graph, including the possibility of node mismatch due to crosstalk. The Hamiltonian path for a graph with $N$ nodes, if it exists, corresponds to a specific self-assembled sequence of $N-1$ particles with all strong interactions. In this paper, we restrict to \textit{acyclic} graphs which contain no loops; results and discussion for general graphs (including cycles) are provided in the \SI.

We analyze the system in the grand canonical ensemble, where each species of edge particle $i$ has an associated concentration $c_i$. Each pair of particles $i, j$ interacts with Boltzmann weight $e^{-\beta E_{ij}}$, where $E_{ij}$ is the interaction energy and $\beta=1/(k_B T)$. The interaction energy matrix then takes the form
\begin{equation} \label{eq:Eij}
    E_{ij} = \begin{cases}
        s, \ \  \text{target node of } i \text{ equals source node of } j \\
        w, \ \ \text{otherwise}
    \end{cases}
\end{equation}
where $s<0$ is the strong binding energy and $w<0$ is the average off-target binding energy, satisfying $w > s$. The yield of the Hamiltonian path is given by the equilibrium fraction of the Hamiltonian path partition function as $Y_{\mathrm{HP}} = {Z_{\mathrm{HP}}}/{Z_{\text{all}}}$, where $Z_{\text{all}}$ is the total partition function summing over all possible structures.  An illustration of the yield of the Hamiltonian path is shown in Fig.~\ref{fig:1}(d).
Following \cite{murugan2015undesired}, $Z_{all}$ can be exactly calculated using a transfer matrix method, allowing us express the yield as 
\begin{equation}
    \label{eq:HP_yield}
    Y_\text{HP} = \frac{Z_\text{HP}}{Z_\text{all}}=\frac{\left(\prod_{i \in \text{HP}} c_i\right) e^{-(N-2) \beta s}}{\sum_{i,j}c_i\left[\left( \mathbb{I}- \mathbf{B}\right)^{-1}\right]_{ij}},
\end{equation}
where $\mathbf{B}$ is a matrix with entries $B_{ij} = e^{-\beta E_{ij}} c_j$. Given this parameterization of the yield, we can optimize over component concentrations to maximize the yield.

Before optimizing for concentrations, we note necessary constraints on the energy gap, $w-s$. For the yield to be maintained, the energetic contributions must outweigh the entropic ones for the desired structure. 
For the trivial case with no extraneous edges outside the Hamiltonian path, the analysis of heterogeneous 1D chain self-assembly in \cite{murugan2015undesired} directly yields the bound: $w-s > 2\log(N)k_B T$. 
As such, we parameterize the energy gap with two parameters, $a$ and $b$, as
\begin{equation}
w - s = a \log(N) + b.   
\label{eq:e_params}
\end{equation}
In \eqref{eq:e_params}, $a \geq 2$ is the energy gap-scaling coefficient which determines the strength of the energy gap $w-s$, in units of $k_B T/\ln(N)$, and $b\geq 0$ is an offset parameter. We will demonstrate below that solving general HPP instances will require even stronger constraints on the gap-scaling coefficient $a$.

\begin{figure}
    \centering
    \includegraphics[width = 0.9\linewidth]{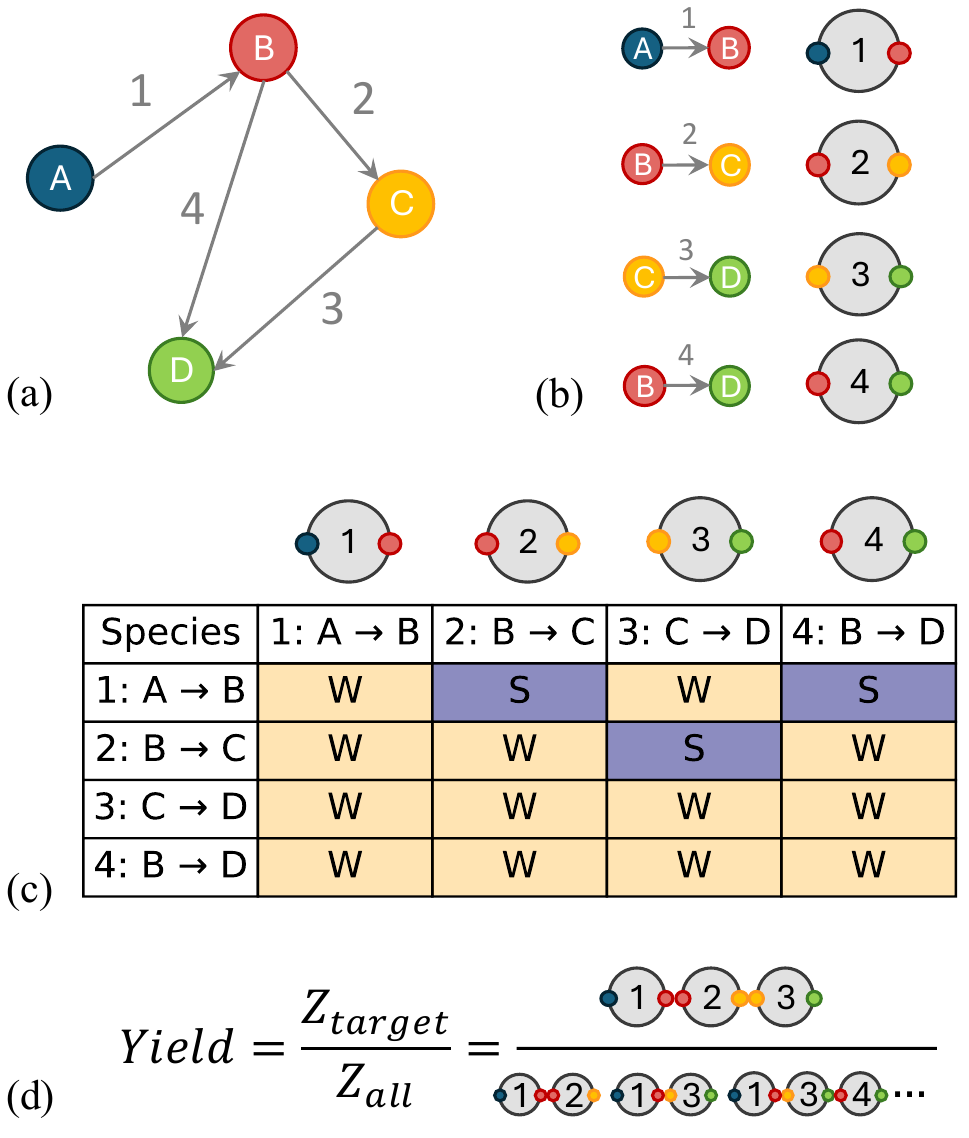}
    \caption{\justifying \textbf{Encoding the Hamiltonian Path Problem into a patchy particle system.} (a) A directed graph with four nodes and four directed edges, including one Hamiltonian path $A\rightarrow B\rightarrow C \rightarrow D$ (edges $1\rightarrow2\rightarrow3$). (b) Each directed edge is represented as a unique particle species with two patches: the head and tail nodes. (c) The interaction matrix encoding pairwise binding strengths between species. A strong interaction $s$ is assigned if two edges are consecutively connected in the graph; otherwise a weak interaction $w$ is used to model crosstalk. (d) The yield is defined as the partition function of the target structure (the Hamiltonian path) divided by the total partition function of all possible assembled chains.}
    \label{fig:1}
\end{figure}

\begin{figure*}
    \centering
    \includegraphics[width = \linewidth]{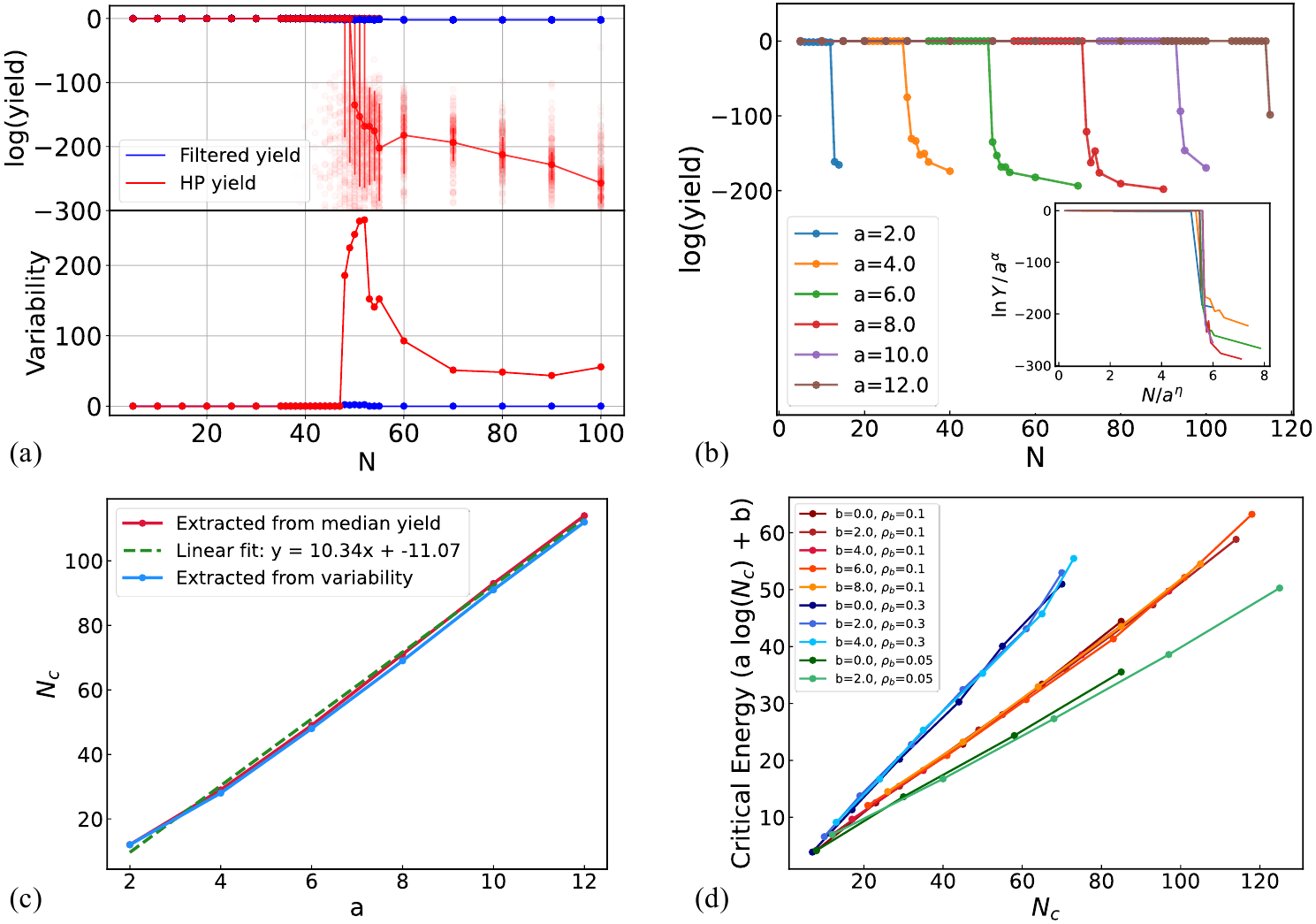}
    \caption{\justifying \textbf{Scaling of HP yield and phase transition-like behavior.} (a) Upper: Median log-yield of filtered assemblies (blue) and Hamiltonian path assemblies (red) as a function of graph size $N$ for $a=6.0$, $b=2.0$, $w = 0.0$, $\rho_b=0.1$. Points show individual graph instances; error bars indicate interquartile range (IQR). Lower: Variability of log-yields across graph instances for the same parameters. (b) HP yield curves as a function of graph size $N$ for various gap-scaling coefficients $a$. Other conditions are the same: $b=2.0$, $w = 0.0$, $\rho_b=0.1$. Inset: Curves for different $a$ collapse under the scaling form $\log Y_\text{HP}= a^\alpha F(N/a^\eta)$. (c) Critical graph size $N_c$ extracted from yield drop (red) and variability peaks (blue). Dashed line: best linear fit with $R^2=0.9983$. 
    (d) Critical energy gap $w-s=a\log N_c+b$ as a function of $N_c$. }
    \label{fig:2}
\end{figure*}

A key challenge in solving the HPP via the optimization framework in \cite{murugan2015undesired} is that the target structure -- the Hamiltonian path itself -- is unknown. This precludes direct optimization of its yield as in Fig.~\ref{fig:1}(d), as the correct path cannot be explicitly enumerated in advance.
Although the specific path is unknown, any Hamiltonian path must contain exactly $N-1$ edges in order to visit all $N$ nodes once and only once. Therefore, we can maximize a surrogate objective: the total yield of all linear assemblies of length $N-1$. We refer to this quantity as the filtered yield. 
This filtered yield includes the desired Hamiltonian path as well as competing structures of the same length.
Mathematically, the filtered yield is given by
\begin{equation}
    \label{eq:filtered_yield}
    Y_\text{filtered}= \frac{Z_{\text{length}=N-1}}{Z_\text{all}}=\frac{\sum_{i,j}c_i\left[\mathbf{B}^{N-2}\right]_{ij}}{\sum_{i,j}c_i\left[\left( \mathbb{I}- \mathbf{B}\right)^{-1}\right]_{ij}},
\end{equation}
where $\mathbf{B}$ is defined as in \eqref{eq:HP_yield}.
We then optimize the species concentrations $\{c_i\}$ to maximize the surrogate objective $\log(Y_{\text{filtered}})$, and use the resulting set of optimal $\{c_i^*\}$ to evaluate the true yield of the Hamiltonian path $Y_\text{HP}$ given in \eqref{eq:HP_yield}. 

We test and evaluate this method on randomly generated directed graphs. The graphs are generated as follows: (1) a single Hamiltonian path is constructed, in order to enable ground-truth yield evaluation; (2) additional extraneous edges are added with probability $\rho_b$; (3) these directed edges are only added to ``shortcut'' the Hamiltonian path, thus restricting the graphs to be acyclic (see an accompanying Python notebook \cite{notebook_git} for details). For general graphs with loops, there is an exponential proliferation of competing structures with all strong bonds; see additional results and discussion in \SI. Under these restrictions, the expected number of edges in an acyclic graph with $N$ nodes is given by 
\begin{equation}
N_{\text{edge}} = N-1 + \frac{\rho_b}{2}(N-1)(N-2).
\end{equation}
For each $N$ ranging from 5 to 100 nodes, we randomly generate 200 graphs following the above process and perform the optimization over concentrations for each graph independently. 

Fig.~\ref{fig:2}(a) shows the yield behavior as a function of graph size $N$ for a representative energy parameter choice $a=6.0$, $b=2.0$ (defined in \eqref{eq:e_params}). 
In the upper panel, we plot the median of the logarithmic filtered yield (blue) and the Hamiltonian path (HP) yield (red) over 200 graph instances. Individual points indicate yields for each realization, while the variability -- defined as the difference between the 75th and 25th quartiles -- is used to quantify fluctuations and is shown in the lower panel of Fig.~\ref{fig:2}(a). 
Beyond a critical threshold value of $N$, denoted by $N_c$, a sharp separation between filtered yield and HP yield emerges, with an abrupt drop in the HP yield. The variability also peaks sharply near the critical $N_c$, reflecting the coexistence of two nearly degenerate free-energy states -- ordered HP and disordered mis-bound configurations.

This critical transition becomes more apparent when examining the effect of varying the gap-scaling coefficient $a$, as shown in Fig.~\ref{fig:2}(b). To quantify the observed scaling, we collapse the yield curves using the ansatz:
\begin{equation}
    \label{eq:collapse}
    \log (Y_\text{HP})=a^\alpha  F\left(\frac{N}{a^\eta} \right)
\end{equation}
for some function $F$, with fitted exponents $\eta = 1.222$ ($R^2=1.000$) and $\alpha = -0.178$ ($R^2=0.485$), as shown in the inset of Fig.~\ref{fig:2}(c). The excellent collapse along the $N/a^\eta$ axis confirms that the transition is governed by a scaling law, resembling a first-order phase transition.

We extract the critical size $N_c$ for each $a$ based on a yield threshold criterion: $\log Y_\text{HP}<-10$. This threshold is robust, as the HP yield decreases rapidly and consistently near the transition. We can also independently extract $N_c$ as the point where the variability spikes. As shown in Fig.~\ref{fig:2}(c), the critical $N_c$ values extracted from the yield drop (red) agree closely with those inferred from the peak of the variability fluctuation curves (blue), providing independent validation. 
%
These results are consistent with a first-order–like transition, with the HP yield serving as the order parameter. The system exhibits competing free-energy minima: an energetically favored ordered HP configuration and an exponentially large manifold of competing structures favored entropically. As $N$ increases, the entropic manifold gains weight until, at $N=N_c$, the two basins become comparable.

To further understand this, in Fig.~\ref{fig:2}(d) we examine the critical energy gap $\Delta E_c = w-s=a\log N_c+b$ versus $N_c$ for varying $\rho_b$ and $b$ values. 
The critical energy gap required to reliably assemble the HP increases with both system size $N$ and $\rho_b$. The parameter $\rho_b$ controls the number of possible extraneous structures that can compete with the HP. All curves collapse by $\rho_b$, indicating that the density of extraneous edges sets the universal energy–entropy balance of this problem.
We derive the total partition function of all length-$N-1$ chains in closed form via a generating function approach (see details in \SI), and show the asymptotic scaling to be:
\begin{equation}
    a(N_c)\;=\;\frac{\Lambda(\rho_{\mathrm{eff}})}{\log N_c}\,N_c\;+\;O(1),
    \label{eq:anscaling}
\end{equation}
where $\Lambda(\rho)=2\log\!\big(1+\sqrt{\rho}\,\big)$, and \(\rho_{\mathrm{eff}}\) is an effective edge density that incorporates unequal particle concentrations through a graph-averaged normalization factor, and reduces to \(\rho_b\) when component concentrations are equal.
However, in the $N$-range we probed here the logarithmic correction is of order unity, so a simple linear fit effectively captures the data: the dashed line in Fig.~\ref{fig:2}(c) provides an excellent linear fit with $R^2 = 0.9983$. For simplicity, we use this linear approximation in the designability discussion that follows. 
On the other hand, Eq.~\eqref{eq:anscaling} also implies that the critical energy gap, $\Delta E_c$, scales linearly with $N_c$:
\begin{equation}
\Delta E_c \sim \Lambda(\rho_\text{eff}) N_c. 
\end{equation}
This explains the near linear dependence observed in Fig.~\ref{fig:2}(d), with slopes dependent on $\rho_b$.

\paragraph{Designability of interactions---}

\begin{figure*}[ht]
\centering
\begin{subfigure}{0.45\textwidth}
    \includegraphics[width=\textwidth]{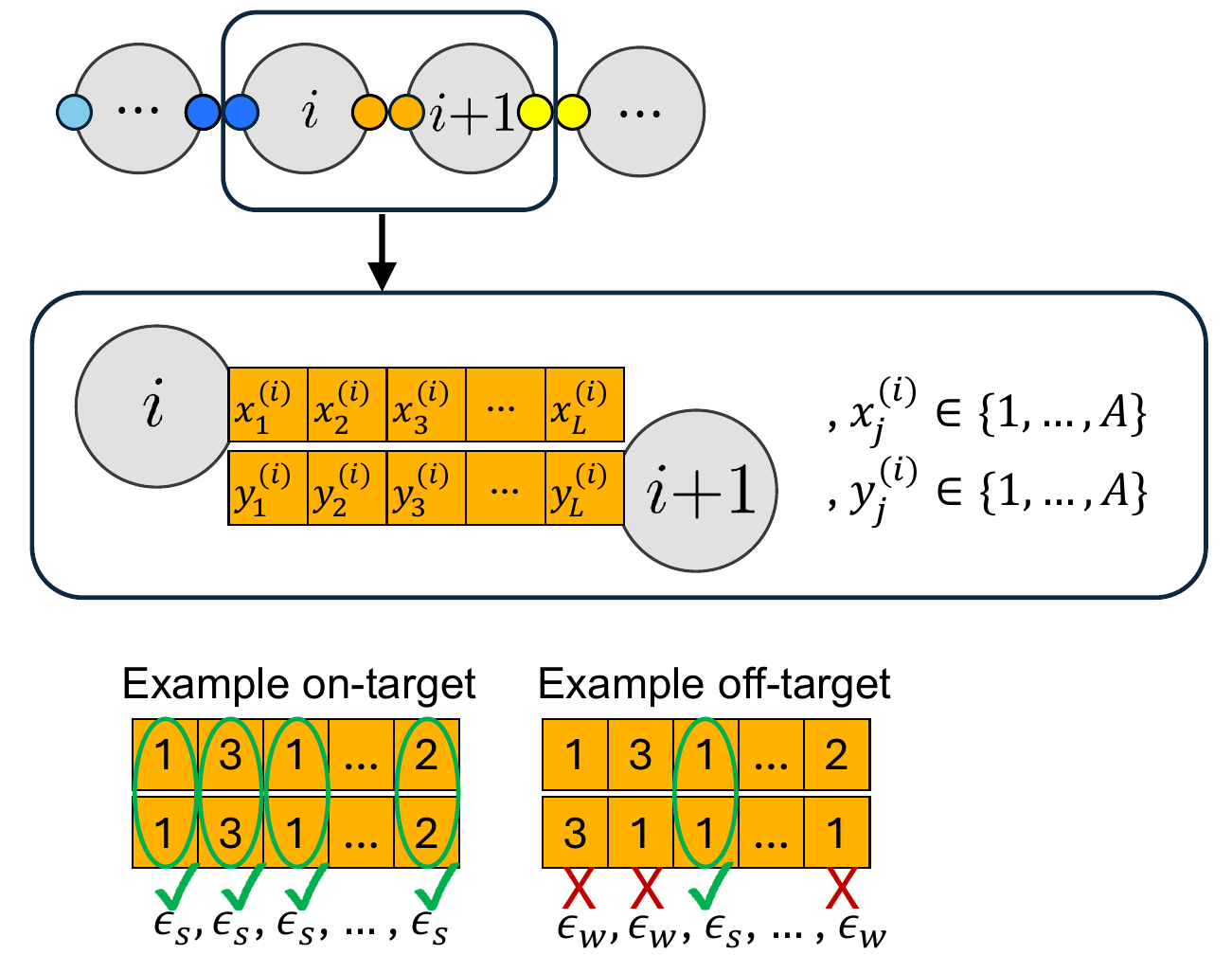}
    \vspace{0.0cm} 
    \caption{\label{fig:binding_setup}}
\end{subfigure}
\hfill
\begin{subfigure}{0.54\textwidth}
    \includegraphics[width=\textwidth]{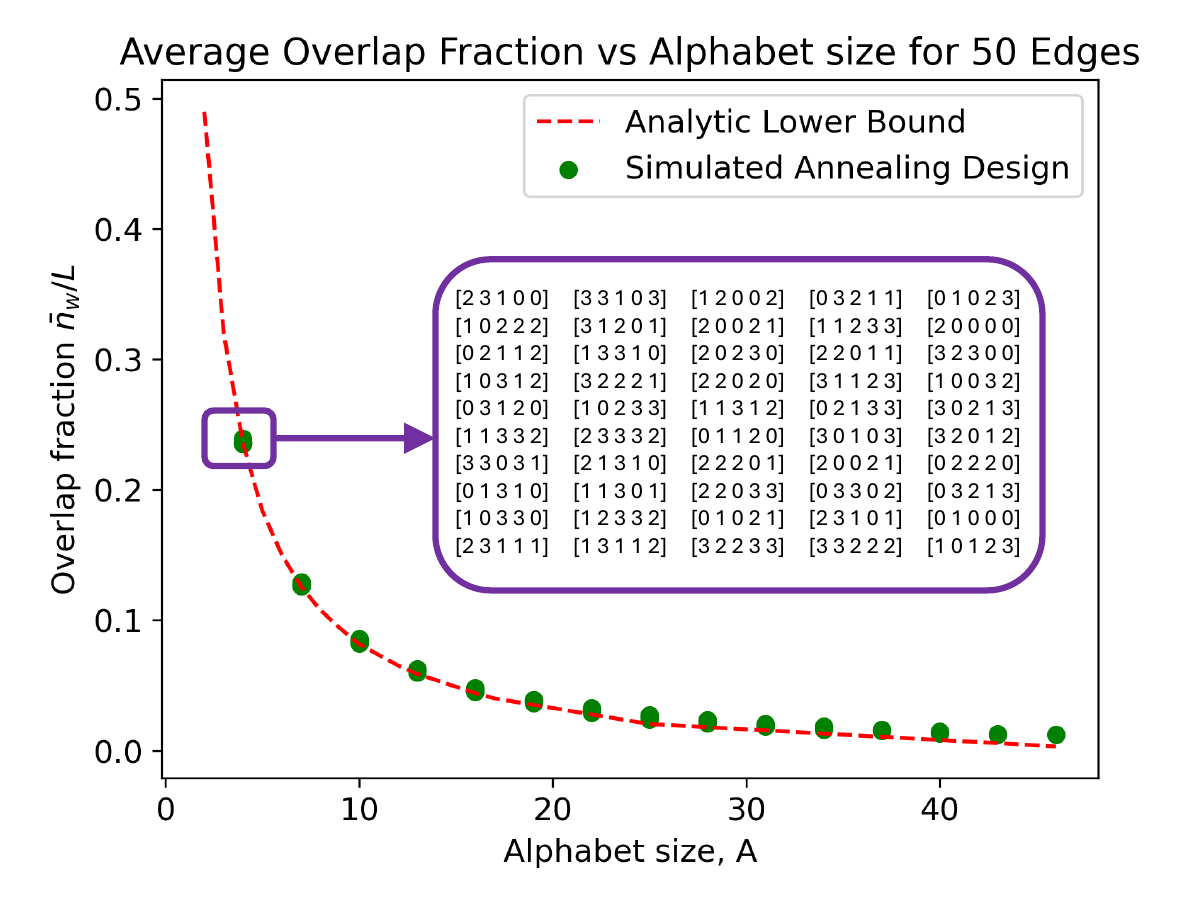}
        \caption{\label{fig:binding_ana_vs_sim}
}
\end{subfigure}
\caption{\label{fig:fig3} \justifying \textbf{Binding design scheme and agreement with analytic expression.} (a) Binding sites between particles are modeled via a lock-key pair of chains of length $L$ with components chosen from an alphabet of size $A$. On-target ``letters'' have an average energy of $\epsilon_s$ and off-target components an average energy of $\epsilon_w$, to model crosstalk. Examples of an on-target binding pair and an off-target pair are shown, where $A=3$ and $L=5$. The on-target binding
has total strength $s = \epsilon_s L$. (b) Overlap fraction $\frac{\bar{n}_w}{L}$ as a function of alphabet size, $A$. The analytic lower bound \eqref{eq:overlap fraction} and a numerical simulated annealing protocol show good agreement. An example of bindings for $A=4$ and $L=5$ generated via simulated annealing is shown.}
\end{figure*}

Now that we have a protocol for optimization without prior knowledge of the solution, we must ask if the interactions required to encode the HPP are designable, at least in principle. In order to encode and solve the HPP in a molecular system, one must be able to design $N_{\text{edge}}$ distinct bindings for each of the graph edges while ensuring that the difference between on-target and off-target bindings, $w-s$, is sufficiently large.

We model potential bindings as a lock-key pair of chains, each of length $L$ with components drawn from an alphabet of size $A$ containing the ``letters'' $\{1, \ldots, A\}$. A diagram of this setup with locks denoted by $x^{(i)}$ and keys denoted by $y^{(i)}$ is shown in Fig.~\ref{fig:fig3}(a). Suppose that on-target bonds between matching letters have average energy $\epsilon_s<0$, and off-target bonds have binding energy $\epsilon_w<0$, with $|\epsilon_w|<|\epsilon_s|$. Then, the average strong binding for a chain of length $L$ is
$s=\epsilon_s L$ and the average weak binding energy is $w = \epsilon_s \bar{n}_w + \epsilon_w(L-\bar{n}_w)$, where $\bar{n}_w$ is the average number of matching ``letters'' between two off-target chains. 

Thus, in order to encode the HPP into a system with alphabet size $A$, one must find $N_{\text{edge}}$ bindings of some length $L$ so that the energy gap between on- and off-target binding satisfies
\begin{equation} \label{eq:wsbound_L}
w - s = \epsilon_s \bar{n}_w + \epsilon_w (L-\bar{n}_w) - \epsilon_s L \geq a \log(N) + b.
\end{equation}

To consider self-assembly processes over experimentally realistic timescales, we require that $s$ is fixed below some threshold. With a fixed $|s|$, we must then require $\epsilon_s = s/L$, so that $\epsilon_s$ is smaller for larger $L$. A similar scheme is presented for magnetic systems in \cite{du2022programming}, with $s = -10 k_B T$. With fixed strong bonding $s$, the ``encodability" bound \eqref{eq:wsbound_L} becomes 
\begin{equation} \label{eq:fixed_s_bound}
|s| \left(1- \frac{\bar{n}_w}{L}\right) \left(1 - \frac{\epsilon_w}{\epsilon_s}\right) \geq a\log(N) + b.
\end{equation}
For a molecular system with the three design parameters of energy fraction $\frac{\epsilon_w}{\epsilon_s}$, overlap fraction $\frac{\bar{n}_w}{L}$, and the maximum allowed $s$, \eqref{eq:fixed_s_bound} yields the maximum allowed $a$ and $b$ values that can be realized for a given $N$. 

A lower bound for the parameter $\frac{\bar{n}_w}{L}$ can be found in terms of the alphabet size $A$ by (see SI for details):
\begin{equation}\label{eq:overlap fraction}
    \frac{\bar{n}_{w}}{L} = \frac{(A-r)q^2 + r(q+1)^2 - N_{\text{edge}}}{N_{\text{edge}}(N_{\text{edge}}-1)} ,
\end{equation}
where $q = \lfloor N_{\text{edge}}/A \rfloor$ and $r = N_{\text{edge}} \bmod A$. When $N_{\text{edge}}$ is divisible by $A$ this simplifies to $\frac{\bar{n}_{w}}{L} = \frac{1}{N_{\text{edge}}-1}\left(\frac{N_{\text{edge}}}{A} - 1\right)$. Fig.~\ref{fig:fig3}(b) shows good agreement between the analytic lower bound \eqref{eq:overlap fraction} and a numerical simulated annealing protocol. In the limit where $N_{\text{edge}}\gg A$, this lower bound becomes $\frac{\bar{n}_{w}}{L} \sim \frac{1}{A}$. This makes sense as given a particular ``lock'' component $x^{(i)}_k$, there is a probability of $\frac{1}{A}$ that an off-target ``key'' component, $y^{(j)}_k$, will match. Then, the total expected number of overlaps between off-target pairs $x^{(i)}$ and $y^{(j)}$ is $\bar{n}_w = \frac{L}{A}$.\\

\paragraph{Phase Diagram for Encoding and Solving HPP Problems---}

\begin{figure*}
    \centering
    \includegraphics[width = 0.8\textwidth]{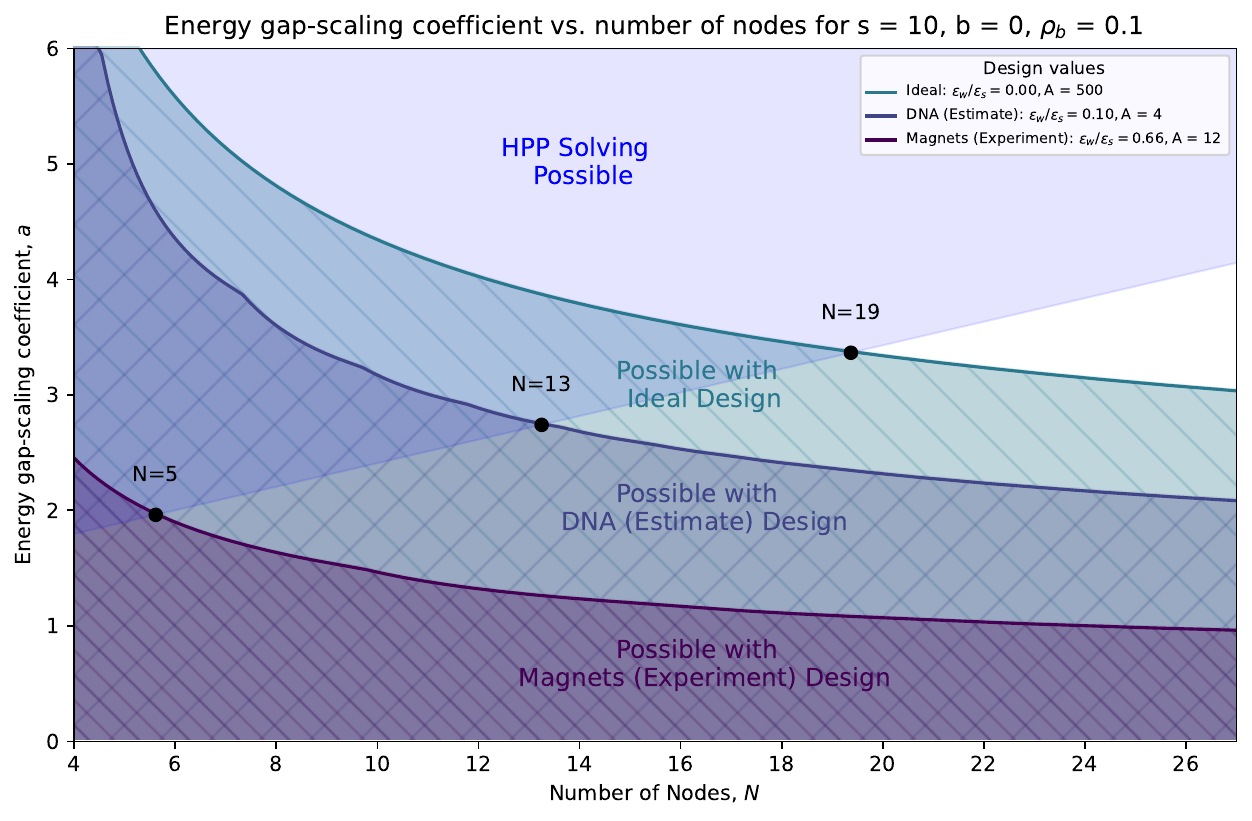}
    \caption{\justifying \textbf{Phase diagram for designable molecular systems that can solve HPP problems.} Phase diagram for encoding and solving an HPP in equilibrium, with parameters $b=0$, $s=10$, $\rho_b=0.1$. For the ideal case with large $A$ and $\frac{\epsilon_w}{\epsilon_s} = 0$, the largest solvable graph has 19 nodes. 
    Using a heuristic for DNA base pairs as the letters, with $A=4$ and $\epsilon_w / \epsilon_s = 0.1$, the largest graph has 13 nodes. Using the magnetic dipole system from \cite{du2022programming} as the ``letters'', with $A=12$ and $\frac{\epsilon_w}{\epsilon_s} = 0.66$, the largest solvable graph has 5 nodes.}
    \label{fig:phase_diagram}
\end{figure*}

Using the results of the previous two sections, we can now determine the largest graph size for an HPP to be encoded and solved by a given molecular system in equilibrium. 
By imposing equality in \eqref{eq:fixed_s_bound} and substituting the lower bound on $\frac{\bar{n}_w}{L}$ from \eqref{eq:overlap fraction}, we obtain the maximal attainable values of $a$ and $b$ for any molecular system with alphabet size $A$ and off-target energy fraction $\frac{\epsilon_w}{\epsilon_s}$. In Fig.~\ref{fig:phase_diagram}, we provide a phase diagram showing regions where the encoding of an HPP is possible and regions where the HPP is solvable. The intersection of these regions is where it is feasible that a given designed molecular system can encode and solve an HPP. 

The phase diagram in Fig.~\ref{fig:phase_diagram} shows regions for an ideal molecular design, a synthetic magnetic panel system from \cite{du2022programming}, and a toy model of DNA. The ideal design features a very large alphabet size $A$ and zero crosstalk between letters, $\frac{\epsilon_w}{\epsilon_s} = 0$, maximizing the attainable energy gap in \eqref{eq:fixed_s_bound}. Even in this ideal case, however, the molecular system can solve the HPP only up to a finite graph size. For the representative parameters $s=10$, $b=0$, and $\rho_b=0.1$, the maximum solvable graph size is $N=19$.

In the magnetic system \cite{du2022programming}, sets of magnetic dipole panels were designed to minimize crosstalk between off-target panels. The phase diagram yields an estimate of the largest HPP that could be solved with their 12-panel system, treating the magnetic panels as the system’s ``letters'' ($A=12$). For the experimentally realized design with 3 dipoles per panel -- corresponding to an off-target energy fraction of $\frac{\epsilon_w}{\epsilon_s} = 0.66$ -- the maximum designable and solvable graph problem has 5 nodes. 

For our estimated DNA model, we use an alphabet size of $A=4$, corresponding to the canonical base pairs. To estimate the off-target energy fraction $\epsilon_w/\epsilon_s$, we use the worst case overlap between off-target base pairs from \cite{peyret_nearest-neighbor_1999}, namely Guanine-Guanine with binding energy $-2.22$ kcal/mol. 
On-target base pair energies are $-12.1$ kcal/mol for A–T and $-21.0$ kcal/mol for G–C \cite{Yanson1979}.
Taking the latter as the representative on-target value, we obtain an estimate of $\epsilon_w/\epsilon_s \approx 0.1$ for a DNA-like system. We note that our DNA-like system is a toy model: it does not instantiate the actual base pair binding energies but instead fixes the total strong bonding energy $s$ while using the estimated ratio $\epsilon_w/\epsilon_s$. 

Table \ref{tab:HPPexamples} lists further example systems with the largest possible solvable HPP problem for each case. Phase diagrams and additional results for these systems (along with the ``Pacman'' particles from \cite{sacanna_lock_2010}) can be found in \SI. 

In both the ideal case and in these specific examples, Fig.~\ref{fig:phase_diagram} demonstrates that for this problem, there is a fundamental tension between designability and solvability. With increasing $N$, a larger number of components need to be designed, so the maximum energy gap between on- and off- target components (governed by $a$) decays.  On the other hand, for the HPP to be encoded and solved, $a$ must increase with $N$ (as shown in Fig.~\ref{fig:2}(c)). These competing constraints ensure that even in the ideal design case, there is a stopping point for the largest possible solvable problem in equilibrium.  Moreover, this is a best case scenario bound: our model for component design assumes ``lock'' and ``key'' binding strings exactly overlap, and the method we outline for solving HPP problems by optimizing for the filtered yield holds only for acyclic graphs. The existence of a maximum possible problem size, even in this best case scenario, motivates that molecular computing in equilibrium cannot scale robustly enough to solve larger and more complex problems. We propose that to do this, one will need to include out-of-equilibrium protocols, such as imposing constraints on particle supply \cite{kurtz1996active} or incorporating kinetic proofreading mechanisms \cite{hopfield1974kinetic, zhu2024proofreading, liang2025magnetic}.

\setlength{\tabcolsep}{8pt}   
\renewcommand{\arraystretch}{1.20}

\begin{table*}[t]
  \centering
  \caption{\label{tab:HPPexamples} Design and HPP-solving parameters for several example systems and the maximum size $N$ that can be attained}
  \begin{tabular}{@{} l c c c c c c @{}}  
    \toprule
    \multicolumn{1}{c}{\textbf{System}} &
    \multicolumn{3}{c}{\textbf{Design}} &
    \multicolumn{3}{c}{\textbf{HPP}}\\
    \cmidrule(r){2-4}\cmidrule(l){5-7}
    & $A$ &
      Mean $\epsilon_w / \epsilon_s$ &
      $ \lvert s \rvert \ [k_B T]$ &
      $b$ &
      $\rho_b$ &
      Max.\ $N$ \\
    \midrule
    Ideal System                       & $\infty$ & 0   & 10 & 0 & 0.05 & 22   \\
    Ideal System                       & $\infty$ & 0   & 10 & 0 & 0.1 & 19   \\
    Ideal System                       & $\infty$ & 0   & 10 & 0 & 0.3 & 15   \\
    
    3-Dipole Magnet Panels  ($b=0$)     & 12       & 0.66& 10 & 0 & 0.1 & 5    \\
    3-Dipole Magnet Panels  ($b=2$)     & 12       & 0.66& 10 & 2 & 0.1 & None    \\
    
    25-Dipole Magnet Panels ($b=0$)    & 12       & 0.33& 10 & 0 & 0.1 & 13   \\
    25-Dipole Magnet Panels ($b=2$)    & 12       & 0.33& 10 & 2 & 0.1 & 11    \\
    
    DNA (estimate, $\rho_b=0.05$)       & 4     & 0.1& 10 & 0 & 0.05 & 14   \\
    DNA (estimate, $\rho_b=0.1$)       & 4     & 0.1& 10 & 0 & 0.1 & 13   \\
    
    ``Pac-Man’’ Colloidal Particles (estimate, $\rho_b=0.1$)    & 6        & 0.2& 10 & 0 & 0.1 & 13    \\
    ``Pac-Man’’ Colloidal Particles (estimate, $\rho_b=0.3$)    & 6        & 0.2& 10 & 0 & 0.3 & 10    \\
    \bottomrule
  \end{tabular}
\end{table*}

\paragraph{Acknowledgments---} 
The authors thank Francesco Mottes for helpful discussions and draft comments. This work was supported by the NSF AI Institute of Dynamic Systems (2112085) and the Alfred P. Sloan Foundation under grant No. G-2021-14198 and an NSERC PGS-D fellowship to EC.

\bibliography{bib}

\end{document}


\title{Supplemental Material for:\\ Fundamental Scaling Constraints for Equilibrium Molecular Computing}

\author{Erin Crawley$^{\S,}$}
\affiliation{School of Engineering and Applied Sciences, Harvard University, Cambridge, Massachusetts 02138, USA} 
\affiliation{Department of Physics, Harvard University, Cambridge, Massachusetts 02138, USA}

\author{Qian-Ze Zhu$^{\S,}$}
\affiliation{School of Engineering and Applied Sciences, Harvard University, Cambridge, Massachusetts 02138, USA} 

\author{Michael P. Brenner}
\affiliation{School of Engineering and Applied Sciences, Harvard University, Cambridge, Massachusetts 02138, USA} 
\affiliation{Department of Physics, Harvard University, Cambridge, Massachusetts 02138, USA}

\date{\today}
\begin{abstract}
\begin{description}
\item[\S] These authors contributed equally to this work.
\end{description}
\end{abstract}
\maketitle

\section{Optimization for filtered yield in graphs with loops}

In this section, we extend the optimization framework introduced in the main text to general graphs that include loops. The goal is to evaluate whether maximizing the filtered yield remains a viable strategy when cycles are permitted in the graph structure.

Here, we generate graphs following principles similar to those described in the main text, but without the ``no-loops" restrictions: (1) a Hamiltonian path is embedded by construction to enable ground-truth yield evaluation; (2) additional extraneous edges are included independently with probability $\rho_b$. 
Unlike in the main text, loops (i.e., directed cycles) are allowed, leading to cyclic graphs. Under these conditions, the expected number of edges in a graph with $N$ nodes is given by 
\begin{equation}
N_{\text{edge}} = N-1+\rho_b N(N-2).
\end{equation}
which accounts for the edges of the Hamiltonian path plus randomly added edges between all other node pairs (excluding self-loops from one node back to itself).

We adopt the same optimization objective as in the acyclic case, maximizing the filtered yield
\begin{equation}
    \label{eq:filtered_yield}
    Y_\text{filtered}= \frac{Z_{\text{length}=N-1}}{Z_\text{all}}=\frac{\sum_{i,j}c_i\left[\mathbf{B}^{N-2}\right]_{ij}}{\sum_{i,j}c_i\left[\left( \mathbb{I}- \mathbf{B}\right)^{-1}\right]_{ij}}
\end{equation}
where $B_{ij} = e^{-\beta E_{ij}} c_j$. As in the main text, after optimizing the species concentration ${c_i}$ to maximize $\log(Y_{\text{filtered}})$, we use the set of optimal ${c_i^*}$ to evaluate the true yield of the Hamiltonian path $Y_\text{HP}$:
\begin{equation}
    \label{eq:HP_yield}
    Y_\text{HP} = \frac{Z_\text{HP}}{Z_\text{all}}=\frac{\prod_{i \in \text{HP}} c_i e^{-(N-2) \beta s}}{\sum_{i,j}c_i\left[\left( \mathbb{I}- \mathbf{B}\right)^{-1}\right]_{ij}}
\end{equation}

For each graph size $N$, we randomly generate 20 (potentially cyclic) graphs  according to the criteria above and perform the optimization independently. For larger graph sizes $N>30$, we frequently observe numerical instabilities that hinder convergence. Additionally, in rare cases, yield calculations become unreliable and can exceed unity, indicating numerical overflow, though this does not qualitatively affect our conclusions.

Fig.~\ref{fig:SI1} presents the results of this analysis for various gap coefficients $a$, with fixed $b=2.0$, $w=0.0$ and $\rho_b=0.1$. 
The median log-yields of both the filtered yield (blue) and HP yield (red) are plotted across graph sizes, along with individual instance results. Fluctuations are again quantified by the interquartile range.
\begin{figure*}
    \centering
    \includegraphics[width = \linewidth]{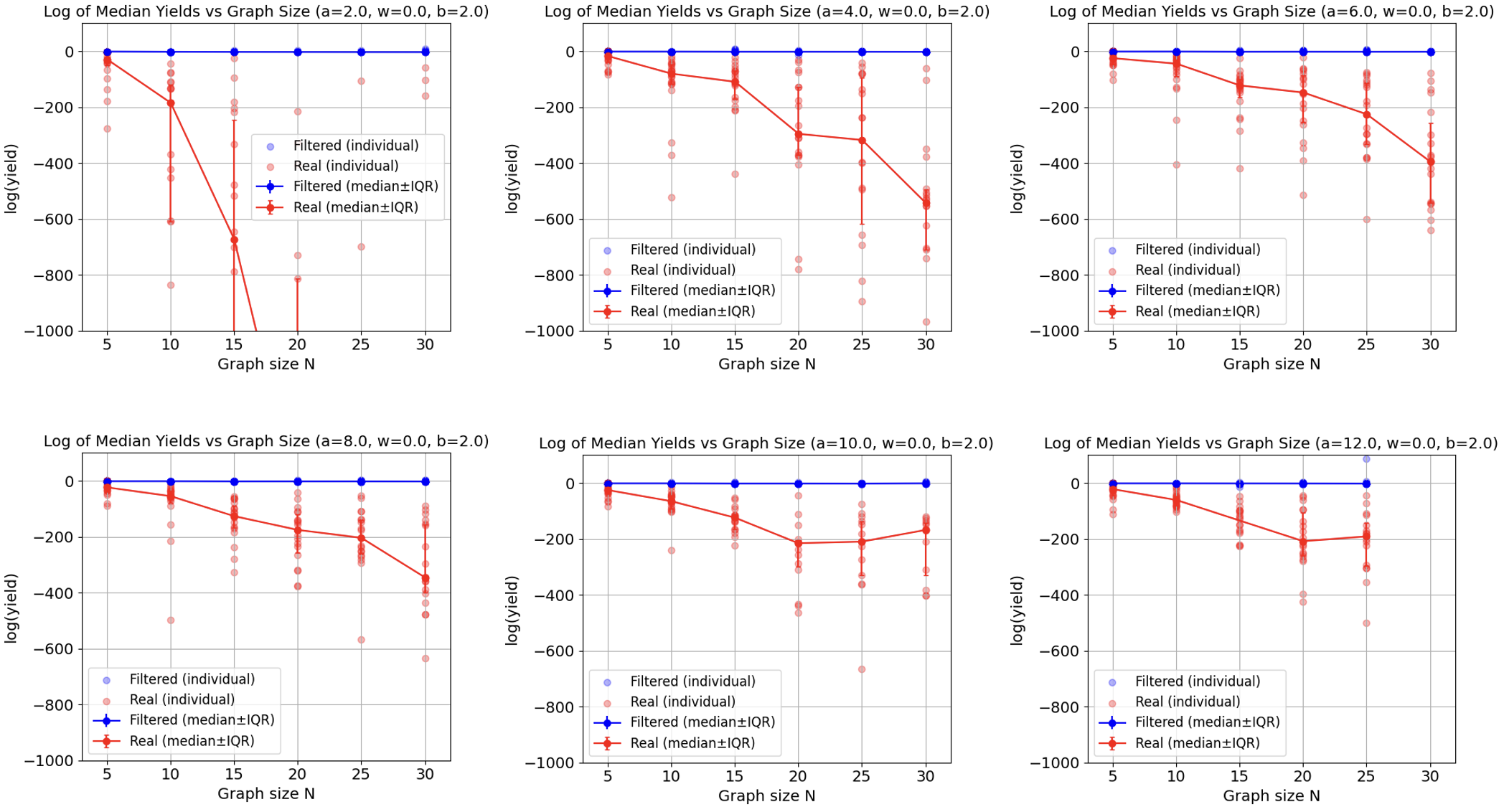}
    \caption{\justifying \textbf{Filtered yield optimization in graphs with loops.} Median log-yield of filtered assemblies (blue) and Hamiltonian path assemblies (red) as a function of graph size $N$, shown for varying gap coefficients $a$ and fixed $b=2.0$, $w = 0.0$, $\rho_b=0.1$. Points show individual graph instances; error bars indicate interquartile range (IQR).}
    \label{fig:SI1}
\end{figure*}
Unlike in the acyclic case, we observe that the HP yield remains vanishingly small -- well below $e^{-10}$ -- for all values of $N$ and $a$ tested. This indicates that the optimization strategy fails to recover the Hamiltonian path in graphs with loops, even when its length is explicitly encoded in the objective. The failure stems from a fundamental limitation of the grand canonical ensemble in the presence of cycles. When loops are allowed and each particle species is present in unlimited supply, the system can explore an exponentially large number of configurations composed entirely of strong bonds. Loops enable the reuse of species to form arbitrarily long, strongly connected chains. As a result, the denominator $Z_{\text{all}}$ becomes dominated by such cyclic, misassembled structures that are indistinguishable energetically from the target, while the Hamiltonian path -- being just one of exponentially many possibilities -- has negligible statistical weight. 

These results conclude that the optimization protocol presented in the main text is only valid for acyclic graphs. For graphs with loops, alternative strategies -- such as imposing constraints on particle supply or incorporating kinetic proofreading mechanisms -- may be necessary to recover high-fidelity self-assembly of Hamiltonian paths.

\section{Analytical analysis of the critical energy–gap scaling}

\subsection*{Physical background and setup}
At the transition reported in the main text, the system balances \emph{energy} (favoring the target Hamiltonian Path, HP) against \emph{entropy} (the exponential number of competing incorrect assemblies). At the critical energy, $\Delta E_c$, the partition function of the HP becomes comparable to that of the dominant competing structures. 
Here we present an analytic argument for how the critical energy scales with the graph size $N$. For clarity, we consider the case where all components have equal concentrations; this is the regime in which a closed–form derivation is tractable. Unequal concentrations only renormalize prefactors and do not affect the asymptotic scaling with $N$.

As noted in the main text, random acyclic graphs are generated as follows: (i) fix a single Hamiltonian path (HP) to enable ground-truth yield evaluation; (ii) add each additional directed edge independently with probability \(\rho_b\); (iii) allow only “shortcut’’ edges relative to the HP so that the graph remains acyclic. We can encode edge probabilities for acyclic graphs in a strictly upper–triangular matrix $A\in\mathbb{R}^{N\times N}$,
\begin{equation}
\label{eq:Amatrix}
A=\begin{bmatrix}
0 & 1 & \rho & \rho & \cdots & \rho \\
0 & 0 & 1 & \rho & \cdots & \rho \\
0 & 0 & 0 & 1 & \cdots & \rho \\
0 & 0 & 0 & 0 & \ddots & \vdots \\
\vdots & \vdots & \vdots & \vdots & \ddots & 1 \\
0 & 0 & 0 & 0 & \cdots & 0
\end{bmatrix},
\end{equation}
where, for notational simplicity, we set \(\rho_b\equiv\rho\in(0,1)\) and \(a_{ij}\) denotes the probability that the directed edge \(i\to j\) exists. The probability weight of a length-\(k\) chain \((i_1\!\to\! j_1),(i_2\!\to\! j_2),\ldots,(i_k\!\to\! j_k)\) is then \(a_{i_1j_1}a_{i_2j_2}\cdots a_{i_kj_k}\). The ordered HP has a weight of 1.

To incorporate the energy difference between strong and weak bonds, we assign a penalty \(-\Delta E=-(w-s)=-a\log N-b\) to each weak bond (absorbing \(\beta\) into \(\Delta E\)). We encode this with a “soft Kronecker’’ matrix \(B\in\mathbb{R}^{N\times N}\),
\begin{equation}
B=\begin{bmatrix}
1 & \gamma & \gamma & \gamma & \cdots & \gamma \\
\gamma & 1 & \gamma & \gamma & \cdots & \gamma \\
\vdots & \vdots & \vdots & \vdots & \ddots & \gamma \\
\gamma & \gamma & \gamma & \gamma & \cdots & 1
\end{bmatrix},
\qquad \gamma:=e^{-\Delta E},
\end{equation}
so that if the end of one edge is \(j\) and the start of the next edge is \(i'\), the energetic factor is \(b_{j i'}=1\) when \(j=i'\) and \(b_{j i'}=\gamma\) otherwise. Equivalently, letting $\mathbf{1}$ denote a length $N$ vector with entries all equal to 1, \(B(\gamma)=\gamma J+(1-\gamma)I\) with \(J=\mathbf{1}\mathbf{1}^\top\).

The resulting \emph{relative} partition function of all assembled chains of length \(N-1\) is
\begin{equation}
    \boxed{\;
    S_{N-1}=\sum_{i_1,j_1,\ldots,i_{N-1},j_{N-1}}
    a_{i_1j_1}\,b_{j_1i_2}\,a_{i_2j_2}\cdots
    b_{j_{N-2}i_{N-1}}\,a_{i_{N-1}j_{N-1}}
    \;}
    \label{eq:sum}
\end{equation}
while the ordered HP contributes \(1\) in this normalization.
Our goal is to evaluate $S_{N-1}$ exactly for finite $N$, use the condition $S_{N-1}\sim\mathcal{O}(1)$ to obtain the an expression for the critical energy gap $\Delta E_c$, and thereby determine the large-$N$ scaling of $\Delta E_c$. 

\subsection*{Matrix reduction and generating function}
Summing over internal indices in \eqref{eq:sum} yields the exact matrix form:
\begin{equation}
    \boxed{\;
    S_{N-1}=\mathbf{1}^\top\big(A\,B(\gamma)\big)^{N-2}A\,\mathbf{1}
    =\mathbf{1}^\top\big((1-\gamma)A+\gamma(A\mathbf{1})\mathbf{1}^\top\big)^{N-2}A\,\mathbf{1}.
    \;}
    \label{eq:openchain-matrix}
\end{equation}
Define $M:=(1-\gamma)A$, $u:=\gamma A\mathbf{1}$, and $v:=\mathbf{1}$. We introduce the generating function:
\begin{equation}
    S(z):=\sum_{k\ge1}S_k\,z^k
    =\mathbf{1}^\top\big(I-z(M+uv^\top)\big)^{-1}zA\,\mathbf{1}.
\end{equation}
The Sherman–Morrison formula gives
\[
(I-z(M{+}uv^\top))^{-1}=R+\frac{z\,Ruv^\top R}{1-z\,v^\top Ru},\qquad R:=(I-zM)^{-1}, 
\]
we thus obtain
\[
S(z)=z\,\mathbf{1}^\top RA\mathbf{1}
+\frac{z^2(\mathbf{1}^\top Ru)(v^\top RA\mathbf{1})}{1-z\,v^\top Ru}.
\]
Since $u=\gamma A \mathbf{1}$, $v=\mathbf{1}$, $M=(1-\gamma)A$, the scalars in the above expression can be evaluated as:
\[
\mathbf{1}^\top RA\mathbf{1}=\Theta(z),
\qquad
v^\top Ru=\gamma\,\Theta(z),
\]
where we have defined
\begin{equation}
    \boxed{\;
    \Theta(z):=\mathbf{1}^\top\big(I-z(1-\gamma)A\big)^{-1}A\,\mathbf{1}
    =\sum_{m\ge1}(1-\gamma)^{m-1}g_m\,z^{m-1},
    \quad g_m:=\mathbf{1}^\top A^m\mathbf{1}.
    \;}
    \label{eq:Theta}
\end{equation}
Therefore
\begin{equation}
    \boxed{\;
    S(z)=\frac{z\,\Theta(z)}{1-z\,\gamma\,\Theta(z)}.
    \;}
    \label{eq:S-GF}
\end{equation}

\subsection*{From the generating function to a composition formula}
We would like to obtain an expression for $S_{N-1}$. To do this, we will extract coefficients $S_k$ of $z^k$ from \eqref{eq:S-GF} using standard formal power series rules. 
For clarity, we use the coefficient extractor notation $[z^k]F(z)$ to denote the coefficient of $z^k$ in the (formal) power series $F(z)$. We also recall the Cauchy product identity
\begin{equation}\label{eq:Cauchy}
[z^m]\Big(\sum_{i\ge0} a_i z^i\Big)\Big(\sum_{j\ge0} b_j z^j\Big)=\sum_{i=0}^{m} a_i b_{m-i}.
\end{equation}
First, we expand the denominator of \eqref{eq:S-GF} as a geometric series:
\[
\frac{1}{1-z\gamma\Theta(z)}=\sum_{q\ge0}\big(z\gamma\Theta(z)\big)^q,
\qquad
S(z)=\sum_{p\ge1}\gamma^{p-1}z^{p}\,\Theta(z)^{p}.
\]
Extracting the coefficient of $z^k$ yields:
\begin{align} \label{eq_Sk_expression_Theta}
S_k
&=[z^k]\,S(z)
=\sum_{p=1}^{k}\gamma^{p-1}\,\left([z^{k}]\,\Big(z^{p}\Theta(z)^{p}\Big)\right)
=\sum_{p=1}^{k}\gamma^{p-1}\,\left([z^{k-p}]\,\Big(\Theta(z)^{p}\Big)\right).
\end{align}
Now, we expand $\Theta(z)^p$ using the Cauchy product \eqref{eq:Cauchy} repeatedly. Given $\Theta(z)$ from \eqref{eq:Theta}, we have
\[
\Theta(z)^p =\sum_{\substack{m_1+\cdots+m_p=k\\ m_i\ge1}}\Big(\prod_{i=1}^{p}g_{m_i}\Big)\, (1-\gamma)^{k-p}z^{k-p}.
\]
Therefore,
\[
[z^{k-p}]\,\Theta(z)^p
=\,(1-\gamma)^{k-p}\!\! \sum_{\substack{m_1+\cdots+m_p=k\\ m_1, \ldots, m_p \ge1}}
\ \prod_{i=1}^{p} g_{m_i}.
\]
Plugging this into \eqref{eq_Sk_expression_Theta} gives the convolutional form for $S_k$:
\begin{equation}
    \boxed{\;
    S_k=\sum_{p=1}^{k}\gamma^{p-1}(1-\gamma)^{k-p}
    \sum_{\substack{m_1+\cdots+m_p=k\\ m_i\ge1}}
    \ \prod_{i=1}^p g_{m_i}.
    \;}
    \label{eq:comp-form}
\end{equation}

\subsection*{Powers of $A$ and a closed form for $g_m=\mathbf{1}^\top A^m\mathbf{1}$}
Let $\mathbf{N}$ denote the upper shift matrix of size $N\times N$ with $(\mathbf{N})_{i,i+1}=1$ and zeros elsewhere:
\begin{equation}
N=\begin{bmatrix}
0 & 1 & 0 & 0 & \cdots & 0 \\
0 & 0 & 1 & 0 & \cdots & 0 \\
0 & 0 & 0 & 1 & \cdots & 0 \\
0 & 0 & 0 & 0 & \ddots & \vdots \\
\vdots & \vdots & \vdots & \vdots & \ddots & 1 \\
0 & 0 & 0 & 0 & \cdots & 0
\end{bmatrix},
\end{equation}
Then, defining 
\begin{equation}
    P(x):=x+\rho\left(x^2+\cdots+x^{N-1}\right)=x+\rho\,x^2\frac{1-x^{\,N-2}}{1-x},
\end{equation}
we have
\begin{equation}
    A=P(\mathbf{N})=\mathbf{N}+\rho\sum_{q=2}^{N-1}\mathbf{N}^q,
    \qquad
    (\mathbf{N})^N=0.
\end{equation}
As a polynomial in $\mathbf{N}$, every power of $A$ is an upper triangular matrix with constant diagonals (an upper-Toeplitz matrix). That is, powers of $A$ take the form:
\begin{equation}
    A^m=P(\mathbf{N})^m=\sum_{d=m}^{N-1} c_d^{(m)}\,\mathbf{N}^{d}, 
\end{equation}
where the sum truncates at $d=N-1$, since $\mathbf{N}$ is nilpotent with $\mathbf{N}^N =0$. Since $A^m$ is an upper-Toeplitz matrix, we have
\begin{equation}
    (A^m)_{ij}= \begin{cases}
    c^{(m)}_{j-i} \, , \, j-i\ge m\\ 
    0\, , \, \text{otherwise}.
    \end{cases}
\end{equation}
To derive the close form of the coefficients $c_d^{(m)}$, we have for $P(\mathbf{N})$:
\[
P(\mathbf{N})=\mathbf{N}\Big(1+\rho\,(\mathbf{N}+\mathbf{N}^2+\cdots)\Big).
\]
up to degree $<N$. We may replace the finite tail by the infinite geometric tail since the nilpotency of $\mathbf{N}$ truncates the infinite sum. Then, 
\[
P(\mathbf{N})^m
=\mathbf{N}^m\sum_{s=0}^{m}\binom{m}{s}\,\rho^{\,s}\,\big(\mathbf{N}+\mathbf{N}^2+\cdots\big)^{s}
=\mathbf{N}^m\sum_{s=0}^{m}\binom{m}{s}\,\rho^{\,s}\,\mathbf{N}^{s}(1-\mathbf{N})^{-s}.
\]
Thus the coefficient of $\mathbf{N}^{d}$ in $P(\mathbf{N})^m$ equals
\[
c_d^{(m)}=\sum_{s=0}^{m}\binom{m}{s}\rho^{s}\,\left([\mathbf{N}^{\,d-m-s}](1-\mathbf{N})^{-s}\right).
\]
Using the binomial series $(1-\mathbf{N})^{-s}=\sum_{q\ge0}\binom{s+q-1}{s-1}\mathbf{N}^{q}$, we obtain
\[
[\mathbf{N}^{\,d-m-s}](1-\mathbf{N})^{-s}=\binom{d-m-1}{s-1},
\]
which yields
\begin{equation}
    \boxed{\;
    c_d^{(m)}=\sum_{s=0}^{\min(m,\,d-m)}
    \binom{m}{s}\,\rho^{\,s}\,\binom{d-m-1}{s-1},
    \quad d\ge m,
    \;}
    \label{eq:cdm}
\end{equation}
with the convention that the $s=0$ term contributes $1$ only when $d=m$.
Furthermore, since $\mathbf{1}^\top \mathbf{N}^d\mathbf{1}=N-d$ (there are $N\!-\!d$ ones on the $d$-th superdiagonal), we obtain
\begin{equation}
    \boxed{\;
    g_m=\mathbf{1}^\top A^m\mathbf{1}
    =\sum_{d=m}^{N-1}(N-d)\,c_d^{(m)}.
    \;}
    \label{eq:gm-sum}
\end{equation}
We may evaluate \eqref{eq:gm-sum} by inserting \eqref{eq:cdm}, separating the $s=0$ term, and interchanging the order of summation for $s\ge1$:
\begin{align*}
g_m
&=(N-m)\cdot 1
\;+\;\sum_{d=m}^{N-1}(N-d)\sum_{s=1}^{\min(m,\,d-m)}\binom{m}{s}\rho^s\binom{d-m-1}{s-1}\\
&=(N-m)\;+\;\sum_{s=1}^{m}\binom{m}{s}\rho^s
\sum_{d=m+s}^{N-1}(N-d)\binom{d-m-1}{s-1}.
\end{align*}
In the inner sum, set $t=d-m$ and $T=N-1-m$. Then $t$ runs from $s$ to $T$, and $N-d=N-(m+t)=(N-m)-t$. Thus
\[
\sum_{d=m+s}^{N-1}(N-d)\binom{d-m-1}{s-1}
=\sum_{t=s}^{T}\Big[(N-m)-t\Big]\binom{t-1}{s-1}
=(N-m)\!\sum_{t=s}^{T}\binom{t-1}{s-1}-\sum_{t=s}^{T}t\binom{t-1}{s-1}.
\]
We now apply two standard identities.

\smallskip
\noindent\textit{Hockey-stick identity 1.}
\[
\sum_{t=s}^{T}\binom{t-1}{s-1}=\binom{T}{s}.
\]

\noindent
Using $\displaystyle \binom{t}{s}=\frac{t}{s}\binom{t-1}{s-1}$, we have
\[
t\binom{t-1}{s-1}=s\binom{t}{s}.
\]
\noindent\textit{Hockey-stick identity 2.}
\[
\sum_{t=s}^{T}\binom{t}{s}=\binom{T+1}{s+1}.
\]

\smallskip
Applying these identities yields
\[
\sum_{t=s}^{T}\Big[(N-m)-t\Big]\binom{t-1}{s-1}
=(N-m)\binom{T}{s}-s\binom{T+1}{s+1}.
\]
Finally, we use Pascal’s rule $\binom{T+1}{s+1}=\binom{T}{s+1}+\binom{T}{s}$ to rewrite
\[
(N-m)\binom{T}{s}-s\binom{T+1}{s+1}
=\big[N-m-s\big]\binom{T}{s}-s\binom{T}{s+1}.
\]
Recalling $T=N-1-m$, we obtain the \emph{exact} closed form
\begin{equation}
    \boxed{\;
    g_m
    =(N-m)\ +\!\!\sum_{s=1}^{m}
    \binom{m}{s}\rho^{s}\Big[(N-m-s)\binom{N-1-m}{s}
    \ -\ s\binom{N-1-m}{s+1}\Big] = \sum_{s=0}^{m}
    \binom{m}{s}\rho^{s}\binom{N-m}{s+1}.
    \;}
    \label{eq:gm-exact}
\end{equation}

\subsection*{Large-$N$ asymptotics and the critical gap scaling}
We identify the critical transition point by the condition that $S_{N-1}\approx1$: \textit{i.e.} when the total relative partition function of all length-($N-1$) chains is comparable with the relative weight of the HP. 
Recall $\gamma$ is the energy penalty of a weak bond: $\gamma=e^{-\Delta E}=\mathrm{e}^{-b}N^{-a}$.
From \eqref{eq:comp-form} with $k=N-1$, the $p=1$ term is
\[
T_1=(1-\gamma)^{N-2}\,g_{N-1}.
\]
From above, we have $g_{N-1}=1$ and $T_1=(1-\gamma)^{N-2}=1-(N-2)\gamma+O(\gamma^2N^2)=1+o(1)$ provided $a\geq2$. The leading correction comes from $p=2$:
\begin{equation}
    T_2=\gamma(1-\gamma)^{N-3}\sum_{m=1}^{N-2}g_m\,g_{N-1-m}
    \label{eq:T2-start}
\end{equation}



\paragraph{Saddle–point asymptotics.}
Starting from \eqref{eq:gm-exact}, write:
\begin{equation}
\label{eq:qms}
g_m = \sum_{s=0}^m q_{m,s} \quad , \quad q_{m,s}=\binom{m}{s}\rho^{s}\binom{N-m}{s+1}\end{equation}
Then, starting from the exact two–block term \eqref{eq:T2-start}, we have:
\begin{equation}
\label{eq:T2qms}
T_2=\gamma(1-\gamma)^{N-3}\sum_{m=1}^{N-2} \left( \sum_{s_1=0}^m q_{m,s_1} \right)\left( \sum_{s_2=0}^{N-1-m} q_{N-1-m,s_2} \right)
\end{equation}

We will analyze the large–$N$ behavior by approximating these sums with a Laplace (saddle–point) method over $m$, $s_1$, and $s_2$. To do this, in the large $N$ limit we set $m=xN$, $s_1=\alpha_1 N$, and $s_2=\alpha_2 N$, so that the $q_{m,s}$ terms are replaced by functions:
\[q_{m,s_1}\to q(xN, \alpha_1 N) \quad , \quad 
q_{N-1-m,s_2}\to q\left((1-x)N, \alpha_2 N\right)
\]
where $x\in(0,1)$, $\alpha_1\in[0,x]$, and $\alpha_2\in [0, 1-x]$. We also apply Stirling’s approximation to the binomials in \eqref{eq:qms}, obtaining
\[
\log\left[ q(xN, \alpha N)\right]=N\,\psi(x,\alpha;\rho)+o(N),
\quad
\psi(x,\alpha;\rho):=x\,H\!\Big(\tfrac{\alpha}{x}\Big)+(1-x)\,H\!\Big(\tfrac{\alpha}{1-x}\Big)+\alpha\log\rho,
\]
where $H(u):=-u\log u-(1-u)\log(1-u)$ is the binary entropy. Hence
\[
\log\left[q(xN, \alpha_1 N)q((1-x)N, \alpha_2 N)\right] = N\big[\psi(x,\alpha_1;\rho)+\psi(1-x,\alpha_2;\rho)\big]+o(N).
\]
Then, replacing the sums in \eqref{eq:T2qms} by integrals, we obtain
\begin{equation}
\label{eq:T2qms_int}
T_2\approx \gamma(1-\gamma)^{N-3}\int_0^1 dx\int_0^x d\alpha_1 \exp\left(N\big[\psi(x,\alpha_1;\rho)\big]\right)\int_0^{1-x}d\alpha_2 \exp\left(N\big[\psi(1-x,\alpha_2;\rho)\big]\right)
\end{equation}
Now we proceed with successive Laplace approximations in $\alpha_1$, $\alpha_2$, and $x$. By symmetry in $x$, the maximizer is at $x^\star=\tfrac12$, and the maximizers satisfy $\alpha_1^* = \alpha_2^*$. 
Applying Laplace approximations, we obtain
\begin{equation}
\label{eq:T2qms_int_laplace}
T_2\approx \gamma(1-\gamma)^{N-3} \left(\frac{2\pi}{N}\right)^{3/2} C(\rho) e^{\Lambda(\rho)N},
\end{equation}
where $C(\rho)>0$ is a function collecting the Laplace approximation prefactors, and we have defined the exponential rate
\[
\Lambda(\rho)= 2\max_{0\le \alpha\le 1/2} \big\{ \psi(\tfrac{1}{2},\alpha;\rho)\big\} = 2\log\!\big(1+\sqrt{\rho}\,\big).
\]
Therefore, with $\gamma=e^{-\Delta E}=e^{-b}N^{-a}$, we obtain the large-$N$ scaling
\[
T_2\sim \gamma\,C(\rho)\,N^{-3/2}\,e^{\Lambda(\rho)N},
\]
and the critical condition $T_2=O(1)$ gives
\[\Delta E_c \sim N \Lambda(\rho) - \frac{3}{2}\log N + O(1)\]
Noting $\Delta E = a\log N + b$, the critical gap-scaling coefficient is
\[
\boxed{\,a(N)=\Lambda(\rho) \frac{N}{\log N} -\frac{3}{2}-\frac{b}{\log N}+o(1),
\qquad \Lambda(\rho)=2\log(1+\sqrt{\rho})\, .}
\]

\subsection*{Remarks}
The above argument assumed equal concentrations of components. Using unequal monomer concentrations will simply reweigh the entries of the probability matrix \eqref{eq:Amatrix}, so that $A$ is replaced by its element-wise product with the concentration matrix, \(C=(c_{ij})\): \(A\to \widetilde A=A\odot C\). The large-\(N\) limit for \(S_{N-1}\) given in \eqref{eq:openchain-matrix} depends on the scaling of the row sums \((\widetilde A\mathbf{1})_i\sim \rho\sum_{j>i+1}c_{ij}\). If the row-averaged concentrations approach a positive constant, \(\overline c>0\), then all combinatorics go through with an \emph{effective} shortcut density \(\rho_{\mathrm{eff}}=\rho\,\overline{c}\), so the dominant entropy factor \(e^{\Lambda(\rho) N}\) becomes \(e^{\Lambda(\rho_{\mathrm{eff}}) N}\) and the critical scaling reads \(a(N)\approx \tfrac{\Lambda(\rho_{\mathrm{eff}})}{\log N}N-\tfrac{3}{2}-\tfrac{b}{\log N}\). Heterogeneity across rows only shifts \(\rho\) within bounds set by the min/max row-average of \(c_{ij}\) and thus brackets the same \(N\) scaling law. Therefore the scaling laws applied to the unequal concentration case should still hold for some effective $\rho_{\rm eff}$.

\section{Relationship between alphabet size and expected number of overlaps}

In this section, we derive an analytic method to find the configuration of $n$ strings that minimizes the average number of pairwise overlaps between strings (\textit{i.e.} minimizing cases when any two strings have the same symbol at any given position). The strings are of length $L$ with entries taken from an alphabet of size $A$.

Let $n$ denote the number of distinct strings (\textit{i.e.} distinct interactions that must be designed), and let $\{x^{(1)}, x^{(2)}, \ldots, x^{(n)}\}$ denote the $n$ strings, where each $x^{(i)}$ is an $L$-component vector with components $x^{(i)}_\ell \in \{1, 2, \ldots, A\}$. 

The number of overlaps between strings $x^{(i)}$ and $x^{(j)}$ is
\begin{equation}
O(x^{(i)}, x^{(j)}) = \sum_{\ell=1}^L \mathbf{1}\left(x_\ell^{(i)} = x_\ell^{(j)}\right),
\end{equation}
where $\mathbf{1}$ is an indicator function. Then, the average number of pairwise overlaps across all $n$ strings is:
\begin{equation}
\bar{n}_w = \frac{1}{\binom{n}{2}} \sum_{1 \leq i < j \leq n} O(x^{(i)}, x^{(j)}) = \frac{2}{n(n-1)} \sum_{1 \leq i < j \leq n} O(x^{(i)}, x^{(j)})
\end{equation}

\begin{theorem}
The average number of pairwise overlaps for $n$ strings of length $L$ over an alphabet of size $A$ is:
\begin{equation}
\bar{n}_w= \frac{L}{n(n-1)} \sum_{\ell=1}^L \left[ \sum_{a=1}^A k_{\ell,a}^2 - n \right]
\end{equation}
where $k_{\ell,a}$ denotes the number of strings with the ``letter'' $a$ at position $\ell$. 
\end{theorem}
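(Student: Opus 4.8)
The plan is to prove the identity by interchanging the two finite sums and then collapsing the per-position count of matching pairs into a quadratic expression in the letter-multiplicities $k_{\ell,a}$. First I would insert the definition $O(x^{(i)},x^{(j)})=\sum_{\ell=1}^L \mathbf{1}(x_\ell^{(i)}=x_\ell^{(j)})$ into $\bar{n}_w=\tfrac{2}{n(n-1)}\sum_{i<j}O(x^{(i)},x^{(j)})$ and swap the order of summation so that the position index $\ell$ becomes the outer loop,
\[
\sum_{1\le i<j\le n}O(x^{(i)},x^{(j)})=\sum_{\ell=1}^L \ \sum_{1\le i<j\le n}\mathbf{1}\!\left(x_\ell^{(i)}=x_\ell^{(j)}\right).
\]
Both sums are finite, so the interchange is unconditionally justified and requires no convergence argument.

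The combinatorial core is the evaluation of the inner sum at a fixed position $\ell$. I would observe that it counts unordered pairs of strings sharing the same symbol at position $\ell$, and that two strings can match there only if they lie in the same letter class. Grouping the monochromatic pairs by their common letter $a$ and using that there are $k_{\ell,a}$ strings with letter $a$ at position $\ell$ gives
\[
\sum_{1\le i<j\le n}\mathbf{1}\!\left(x_\ell^{(i)}=x_\ell^{(j)}\right)=\sum_{a=1}^A \binom{k_{\ell,a}}{2}.
\]
Expanding $\binom{k_{\ell,a}}{2}=\tfrac12\!\left(k_{\ell,a}^2-k_{\ell,a}\right)$ and applying the conservation constraint $\sum_{a=1}^A k_{\ell,a}=n$ (each string contributes exactly one letter at each position) then yields $\sum_{a}\binom{k_{\ell,a}}{2}=\tfrac12\!\left(\sum_{a}k_{\ell,a}^2-n\right)$. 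Summing over $\ell$ and multiplying by the prefactor $\tfrac{2}{n(n-1)}$ completes the derivation.

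The one step that genuinely deserves care is the bookkeeping of the overall constant rather than anything conceptual: the factor $\tfrac12$ from $\binom{k}{2}$ must be tracked against the $\tfrac{2}{n(n-1)}$ in the definition of $\bar{n}_w$. Carrying this out, the two factors of two cancel and the natural double-count delivers $\bar{n}_w=\tfrac{1}{n(n-1)}\sum_{\ell=1}^L\!\big[\sum_{a}k_{\ell,a}^2-n\big]$, i.e.\ the sum over positions carries no additional factor of $L$. I would therefore expect the verification of the stated leading constant $\tfrac{L}{n(n-1)}$ to be the main point to reconcile, since the elementary pair-counting argument above produces $\tfrac{1}{n(n-1)}$; the per-position identity $\sum_a\binom{k_{\ell,a}}{2}=\tfrac12(\sum_a k_{\ell,a}^2-n)$ is itself exact and unambiguous.
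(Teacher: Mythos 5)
Your proof is correct and follows essentially the same route as the paper's own: swap the sums so position $\ell$ is outermost, group matching pairs at each position by their common letter to get $\sum_a \binom{k_{\ell,a}}{2}$, expand the binomial coefficient, and apply the constraint $\sum_a k_{\ell,a}=n$. The one substantive point you raise — that the pair-counting argument produces the prefactor $\tfrac{1}{n(n-1)}$, not $\tfrac{L}{n(n-1)}$ — is well taken: the theorem statement as printed is inconsistent, since it carries both an overall factor of $L$ \emph{and} the sum over $\ell$. The paper's own proof confirms your version; its final line reads
\begin{equation*}
\bar{n}_w = \frac{2}{n(n-1)} \sum_{\ell=1}^L \frac{1}{2} \left( \sum_{a=1}^A k_{\ell,a}^2 - n \right) = \frac{1}{n(n-1)} \left[ \sum_{\ell=1}^L \sum_{a=1}^A k_{\ell,a}^2 - nL \right],
\end{equation*}
which is exactly what you derived. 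The stray $L$ in the boxed statement is a typo: the factor of $L$ legitimately appears only later, in the corollary, where the per-position minimum is identical across positions and the sum over $\ell$ has been carried out (replaced by $L$), with no residual $\sum_\ell$. One further small point in your favor: the paper's proof inserts an unnecessary caveat ``where we have assumed all $k_{\ell,a}>1$,'' but the identity $\binom{k}{2}=\tfrac12 k(k-1)$ holds for $k=0$ and $k=1$ as well (both sides vanish), so your unconditional statement of the per-position identity is the cleaner one.
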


\begin{proof}
Let $k_{\ell, a}$ denote the number of strings with symbol $a$ in position $\ell$. Then, we must have $\sum_{a=1}^A k_{\ell,a} = n$.

For a fixed $\ell$, the number of pairs between the $n$ strings which overlap at position $\ell$ is:
\begin{equation}
\sum_{a=1}^A \binom{k_{\ell,a}}{2} = \frac{1}{2} \sum_{a=1}^A k_{\ell,a}(k_{\ell,a} - 1) = \frac{1}{2} \left( \sum_{a=1}^A k_{\ell,a}^2 - \sum_{a=1}^A k_{\ell,a} \right) = \frac{1}{2} \left( \sum_{a=1}^A k_{\ell,a}^2 - n \right)
\end{equation}
where we have assumed all $k_{\ell, a}>1$.  
The total number of overlaps across all pairs and all positions is:
\begin{equation}
\sum_{1 \leq i < j \leq n} O(x^{(i)}, x^{(j)}) =  \sum_{\ell=1}^L \sum_{a=1}^A \binom{k_{\ell,a}}{2} =  \sum_{\ell=1}^L \frac{1}{2} \left( \sum_{a=1}^A k_{\ell,a}^2 - n \right)
\end{equation}
Thus, the average number of pairwise overlaps is:
\begin{equation}\label{eq:nw_expression}
\bar{n}_w = \frac{2}{n(n-1)} \sum_{\ell=1}^L \frac{1}{2} \left( \sum_{a=1}^A k_{\ell,a}^2 - n \right) = \frac{1}{n(n-1)} \left[ \sum_{\ell=1}^L \sum_{a=1}^A k_{\ell,a}^2 - nL \right]
\end{equation}
\end{proof}
To minimize $\bar{n}_w$, we must minimize $\sum_{a=1}^A k_{\ell,a}^2$ for each position $\ell$.

\begin{lemma}
    For fixed $n$ and $A$, the sum $\sum_{a=1}^A k_{\ell,a}^2$ satisfying $\sum_{a=1}^A k_{\ell,a} = n$ is minimized when the values of $k_{\ell,a}$ are as equal as possible. The minimum value is

\begin{equation}
\left[\sum_{\alpha=1}^A k_{\ell,\alpha}^2\right]_{\min} = (A-r)q^2 + r(q+1)^2
\end{equation}
    
\end{lemma}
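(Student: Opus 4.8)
The plan is to prove the lemma by a discrete \emph{smoothing} (exchange) argument that exploits the strict convexity of $t\mapsto t^2$ while respecting the integrality constraint $k_{\ell,a}\in\mathbb{Z}_{\ge0}$. Since the feasible set $\{(k_{\ell,1},\dots,k_{\ell,A}): k_{\ell,a}\ge0,\ \sum_a k_{\ell,a}=n\}$ is finite, a minimizer exists, so it suffices to characterize it. I would first fix notation by writing $n=Aq+r$ with $q=\lfloor n/A\rfloor$ and $0\le r<A$, so that the ``as equal as possible'' allocation places $q+1$ strings in $r$ of the letters and $q$ strings in the remaining $A-r$ letters.

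The key step is to show that any minimizer has all counts within $1$ of each other. Suppose not: then there are two letters $a,b$ with $k_{\ell,a}\ge k_{\ell,b}+2$. I would transfer one unit, replacing $(k_{\ell,a},k_{\ell,b})$ by $(k_{\ell,a}-1,k_{\ell,b}+1)$. This preserves the constraint $\sum_a k_{\ell,a}=n$ and changes the objective by
\begin{equation}
(k_{\ell,a}-1)^2+(k_{\ell,b}+1)^2-k_{\ell,a}^2-k_{\ell,b}^2 = -2\big(k_{\ell,a}-k_{\ell,b}-1\big)\le -2<0,
\end{equation}
a strict decrease. Hence at any minimizer no two counts differ by $2$ or more, so every $k_{\ell,a}$ takes one of two consecutive integer values $\{q,q+1\}$.

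It then remains to pin down $q$ and the multiplicities from the constraint. If $r'$ of the letters carry $q+1$ and the other $A-r'$ carry $q$, the total is $(A-r')q+r'(q+1)=Aq+r'$, which equals $n$ precisely when $q=\lfloor n/A\rfloor$ and $r'=n-Aq=r$; this reproduces the decomposition fixed above. Substituting this balanced allocation into $\sum_a k_{\ell,a}^2$ yields the claimed minimum $(A-r)q^2+r(q+1)^2$.

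The argument is elementary, and I do not anticipate a serious obstacle; the one point requiring care is that the naive continuous optimum $k_{\ell,a}=n/A$ need not be an integer, so a Lagrange-multiplier or Jensen bound alone would not certify attainability of the bound. The exchange step circumvents this by operating directly on the integer lattice, and the strict inequality (the $-2$ in the displayed identity) simultaneously guarantees that the balanced allocation is the unique minimizer up to relabeling of the letters.
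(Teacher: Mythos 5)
Your proof is correct, and it takes a genuinely different (and more rigorous) route than the paper's. The paper simply invokes convexity of $\sum_a k_{\ell,a}^2$ and asserts that the minimum is attained when the $k_{\ell,a}$ are ``as equal as possible,'' then reads off the balanced allocation; strictly speaking, convexity alone (via Jensen or a Lagrange multiplier) only identifies the continuous optimum $k_{\ell,a}=n/A$, which need not be an integer, so the paper leaves the rounding step unjustified. Your exchange argument closes exactly that gap: you work directly on the integer lattice, show that any configuration with two counts differing by at least $2$ can be strictly improved by the transfer $(k_{\ell,a},k_{\ell,b})\mapsto(k_{\ell,a}-1,k_{\ell,b}+1)$, and then pin down the two-valued profile $\{q,q+1\}$ with multiplicities $A-r$ and $r$ from the division algorithm. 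The computation $-2(k_{\ell,a}-k_{\ell,b}-1)\le-2$ is right (and the transfer preserves nonnegativity), so your argument also delivers something the paper does not claim: uniqueness of the minimizer up to relabeling of the letters. The trade-off is brevity --- the paper's appeal to convexity is a one-line heuristic that happens to give the right answer, while your smoothing argument is slightly longer but is a complete proof of the integer-constrained statement as actually used in Eq.~\eqref{eq:nw_expression_final}.
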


\begin{proof}
Since $\sum_{a=1}^A k_{\ell,a} = n$, and the sum $\sum_{a=1}^A k_{\ell,a}^2$ is a convex function of the variables $k_{\ell,a}$, the sum will be minimized when the values of $k_{\ell,a}$ are as equal as possible.

Let $q = \lfloor n/A \rfloor$ (floor division by $A$), and $r = n \bmod A$, so that $n = Aq + r $. Then, the optimal choice of $n$ symbols for the $\ell$th position has
\begin{itemize}
\item $(A-r)$ symbols appearing $q$ times each
\item $r$ symbols appearing $(q+1)$ times each
\end{itemize}

This gives the minimum value:
\begin{equation}
    \left[\sum_{\alpha=1}^A k_{\ell,\alpha}^2\right]_{\min} = (A-r)q^2 + r(q+1)^2
\end{equation}

\end{proof}

\begin{corollary}
The minimum average number of pairwise overlaps is:

\textbf{Case 1:} If $A$ divides $n$, then $q = n/A$ and $r = 0$:
$$\left[\bar{n}_w\right]_{\min} = \frac{L}{n(n-1)} \left[ A \cdot \left(\frac{n}{A}\right)^2 - n \right] = \frac{L}{n(n-1)} \left[ \frac{n^2}{A} - n \right] = L \cdot \frac{1}{A} \cdot \frac{n}{n-1}$$

\textbf{Case 2:} General case with $q = \lfloor n/A \rfloor$ and $r = n \bmod A$:
\begin{equation}
\label{eq:nw_expression_final}
    \left[\bar{n}_w\right]_{\min} = \frac{L}{n(n-1)} \left[ (A-r)q^2 + r(q+1)^2 - n \right]
\end{equation}
\end{corollary}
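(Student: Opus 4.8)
The plan is to obtain the Corollary as a direct consequence of the Theorem and the Lemma, with the one genuinely new ingredient being that the minimization decouples across the $L$ positions. Concretely, I would start from the explicit form of the average overlap derived in the proof of the Theorem, namely \eqref{eq:nw_expression},
\begin{equation*}
\bar{n}_w=\frac{1}{n(n-1)}\left[\sum_{\ell=1}^{L}\sum_{a=1}^{A}k_{\ell,a}^2-nL\right],
\end{equation*}
and observe that the only configuration-dependent quantity is $\sum_{\ell}\sum_{a}k_{\ell,a}^2$. Since the prefactor $1/[n(n-1)]$ and the additive constant $nL$ are fixed once $n,A,L$ are specified, minimizing $\bar{n}_w$ is equivalent to minimizing $\sum_{\ell=1}^{L}\sum_{a=1}^{A}k_{\ell,a}^2$ subject to the per-position constraints $\sum_{a}k_{\ell,a}=n$ for each $\ell$.

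The key structural step I would emphasize is \emph{separability}: the objective is a sum of $L$ terms, the $\ell$-th of which, $\sum_{a}k_{\ell,a}^2$, depends only on the counts at position $\ell$, and the constraint $\sum_a k_{\ell,a}=n$ also couples only those same counts. Because there is no constraint linking counts across different positions, the joint minimum equals the sum of the per-position minima. Applying the Lemma at each position gives the position-independent value $[\sum_a k_{\ell,a}^2]_{\min}=(A-r)q^2+r(q+1)^2$ with $q=\lfloor n/A\rfloor$ and $r=n\bmod A$; summing this identical contribution over $\ell=1,\dots,L$ produces a factor $L$, so that $\min\sum_{\ell}\sum_a k_{\ell,a}^2=L\,[(A-r)q^2+r(q+1)^2]$. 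Substituting back yields Case 2,
\begin{equation*}
[\bar{n}_w]_{\min}=\frac{L}{n(n-1)}\left[(A-r)q^2+r(q+1)^2-n\right],
\end{equation*}
which is \eqref{eq:nw_expression_final}, and Case 1 follows by setting $r=0$, $q=n/A$, giving $\tfrac{L}{n(n-1)}(n^2/A-n)=\tfrac{L(n-A)}{A(n-1)}$, which reduces to the quoted $\tfrac{L}{A}\tfrac{n}{n-1}$ in the regime $A\ll n$.

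The step I expect to require the most care is not the algebra but the claim that this lower bound is actually \emph{attained} by a legitimate configuration of $n$ strings. I would argue achievability constructively: at each position $\ell$ independently, partition the $n$ string indices into $A$ blocks of sizes as equal as possible ($r$ blocks of size $q+1$ and $A-r$ blocks of size $q$) and assign letter $a$ to block $a$; doing this independently at every position realizes the balanced counts at all $L$ positions simultaneously, so the per-position minima are achieved jointly. The only residual subtlety is the requirement that the $n$ strings be \emph{distinct}: one must verify that the freedom in choosing which indices populate each block at each position can be used to separate any two strings in at least one coordinate. This is a mild condition — it holds whenever $L$ is large enough that the balanced assignments admit $n$ distinct codewords (in particular whenever $A^{L}\gtrsim n$) — and I would state it as the hypothesis under which the minimum, rather than merely the lower bound, is realized.
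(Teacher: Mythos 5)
Your proposal is correct and follows essentially the same route as the paper: the Corollary there is obtained exactly by the substitution you describe --- the minimization decouples across positions (the paper states ``To minimize $\bar{n}_w$, we must minimize $\sum_{a=1}^A k_{\ell,a}^2$ for each position $\ell$''), the Lemma is applied per position, and the factor $L$ arises from summing the identical per-position minima. One genuinely useful observation in your write-up: the paper's final equality in Case 1 is not exact, since $\frac{L}{n(n-1)}\left[\frac{n^2}{A}-n\right]=\frac{L(n-A)}{A(n-1)}$ rather than $L\cdot\frac{1}{A}\cdot\frac{n}{n-1}$; you correctly compute the exact value and note that the quoted form holds only in the regime $A\ll n$, and your achievability/distinctness remarks, while not needed for the bound itself, address a point the paper leaves implicit.
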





\section{Phase Diagrams for Additional Parameters}
Fig.~\ref{fig:supp_dna} and Fig.~\ref{fig:supp_magnets} provide phase diagrams for additional $b$ and $\rho_b$ values, sweeping $b=\{0, 2\}$ and $\rho_b = \{0.05, 0.1, 0.3\}$. Intersection points between the design and HPP solving curves are noted, showing the largest possible $N$ for each system. If the intersection point occurs for $a<2$ (the energy bound found in \cite{murugan2015undesired} corresponding to a trivial graph with $\rho_b=0$), the intersection point is not labeled. The bound $a\geq2$ is a reasonable constraint on energy versus entropy, ensuring that the energy penalty for replacing a strong bond with a weak bond, $e^{\beta(w - s)}$, outweighs the number of ways to form that weak bond $\sim N^2$. Table \ref{tab:HPPexamples} lists the design parameters and maximum HPP graph size $N$ which can be solved using our methods for several systems and varying $\rho_b$ and $b$ values. In particular, we include the magnetic dipole system, heuristic DNA model, as well as the lock/key colloid ``Pacman'' particles realized in \cite{sacanna_lock_2010} and modeled for their information capacity in \cite{huntley_infocap}. For the ``Pacman'' particles, in \cite{huntley_infocap} it was found that an alphabet of 6 distinct particles could be engineered, and we estimate the average $\epsilon_w/\epsilon_s$ of their system to be 0.2.

\begin{figure*}[ht]
\centering
\begin{subfigure}{0.49\textwidth}
\includegraphics[width=\textwidth]{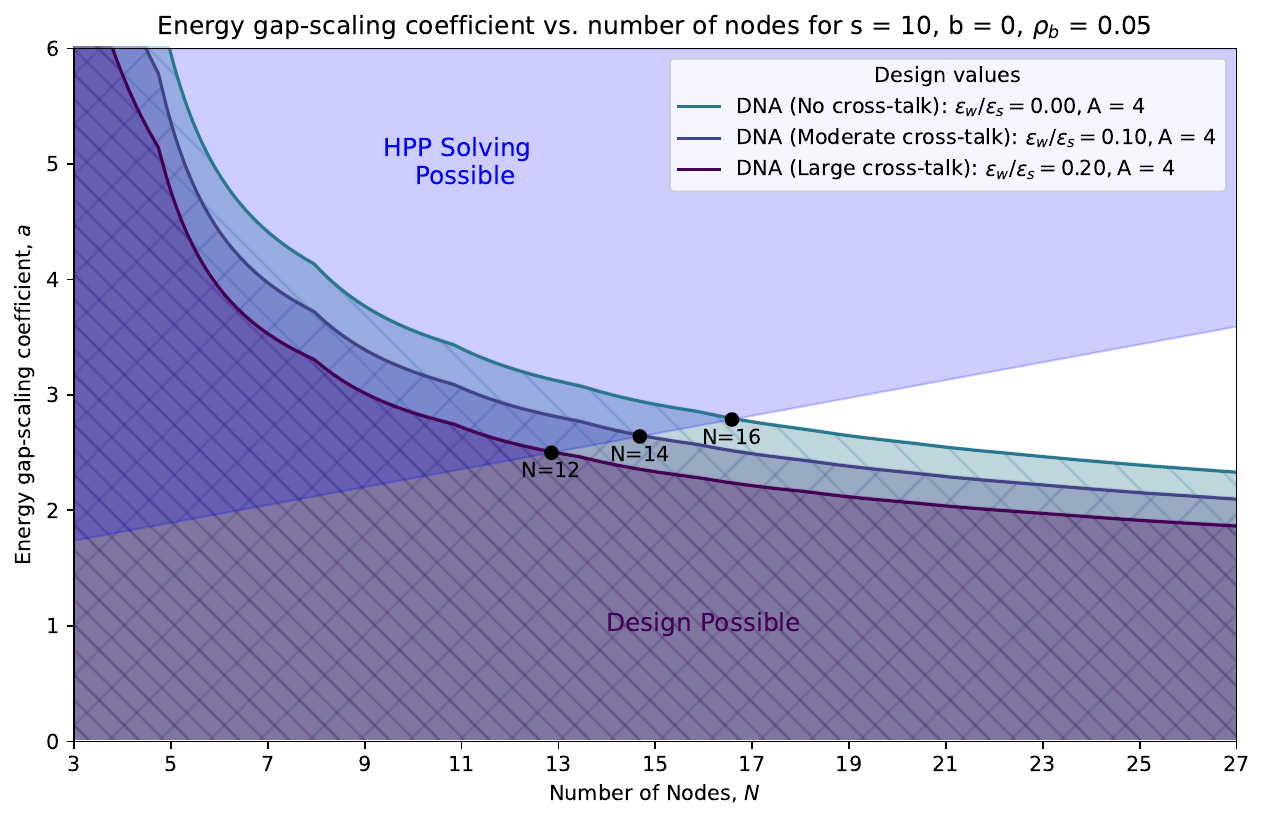}
    \caption{\label{}}
\end{subfigure}
\begin{subfigure}{0.49\textwidth}
\includegraphics[width=\textwidth]{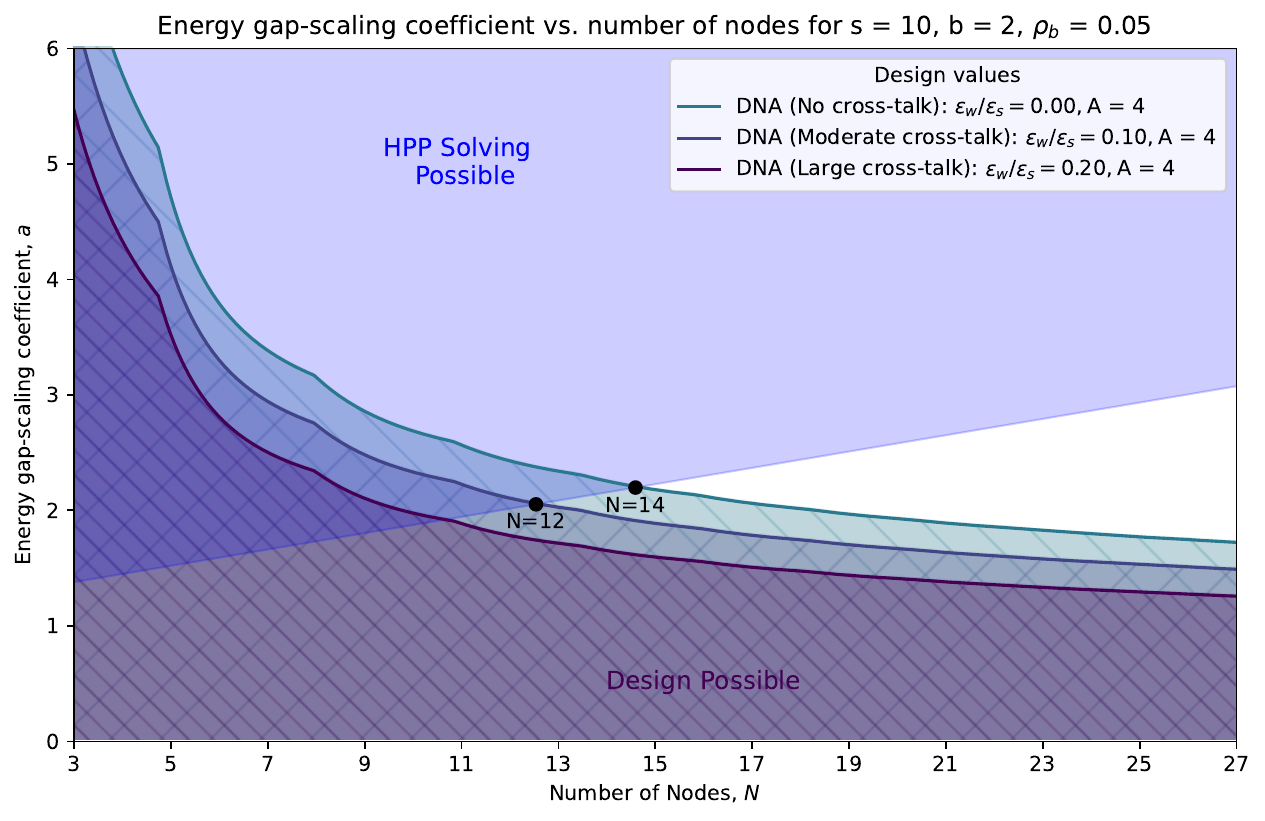}
    \caption{\label{}}
\end{subfigure}
\begin{subfigure}{0.49\textwidth}
    \includegraphics[width=\textwidth]{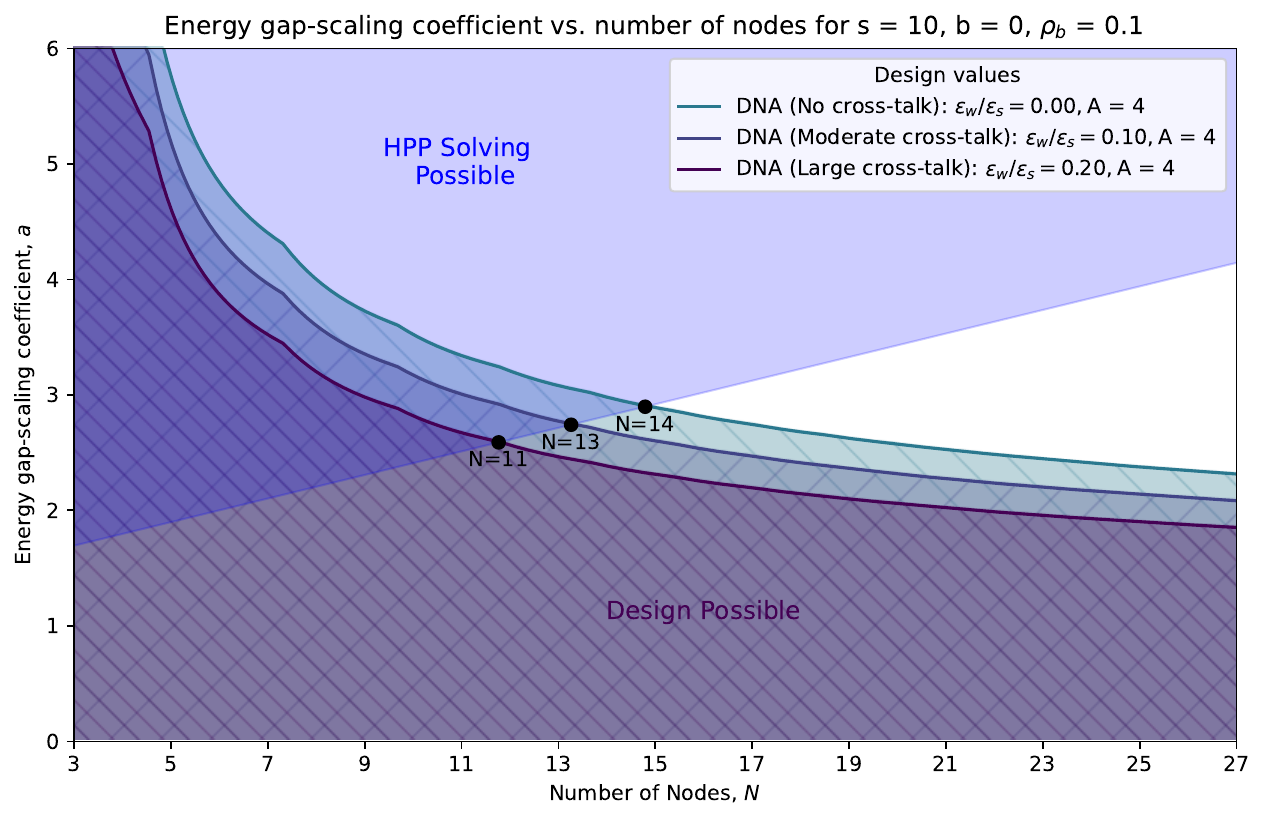}
    \caption{\label{}}
\end{subfigure}
\begin{subfigure}{0.49\textwidth}
    \includegraphics[width=\textwidth]{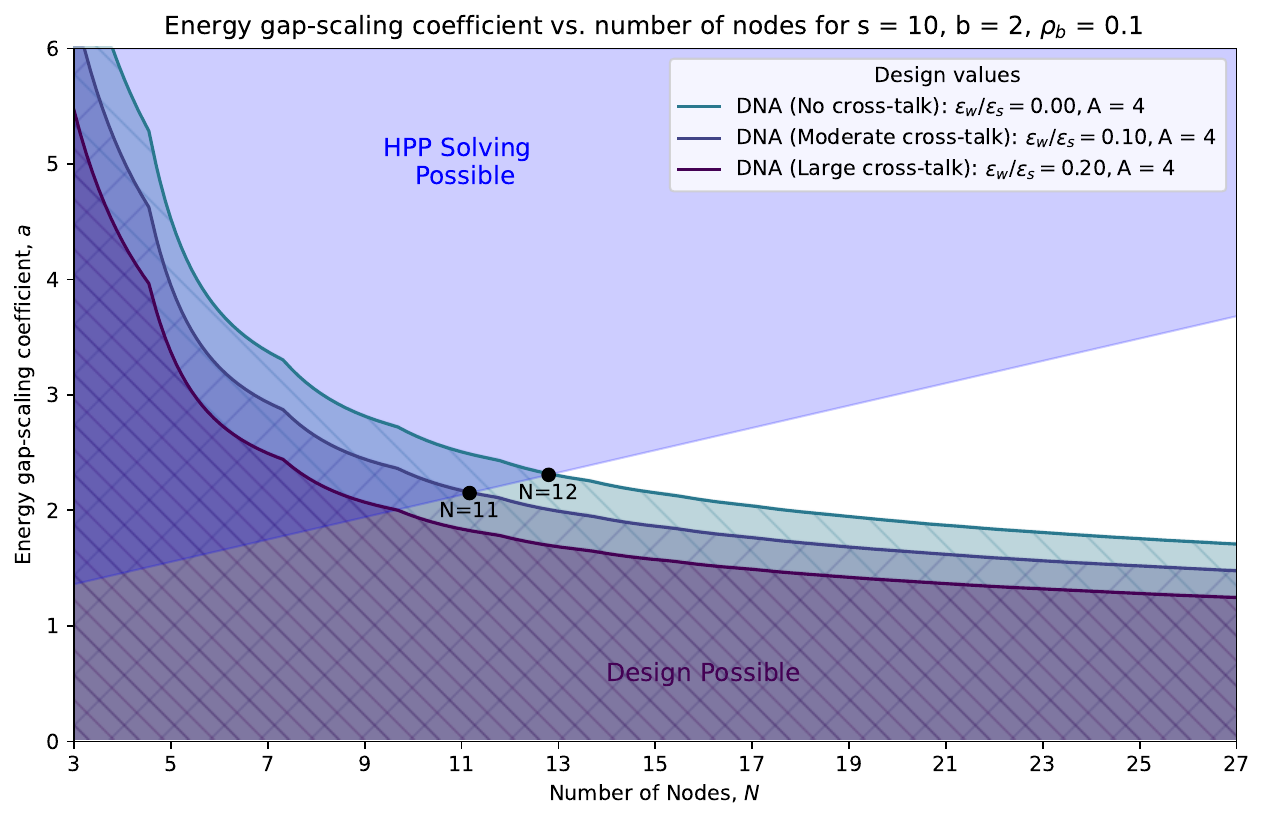}
    \caption{\label{}}
\end{subfigure}
\begin{subfigure}{0.49\textwidth}
    \includegraphics[width=\textwidth]{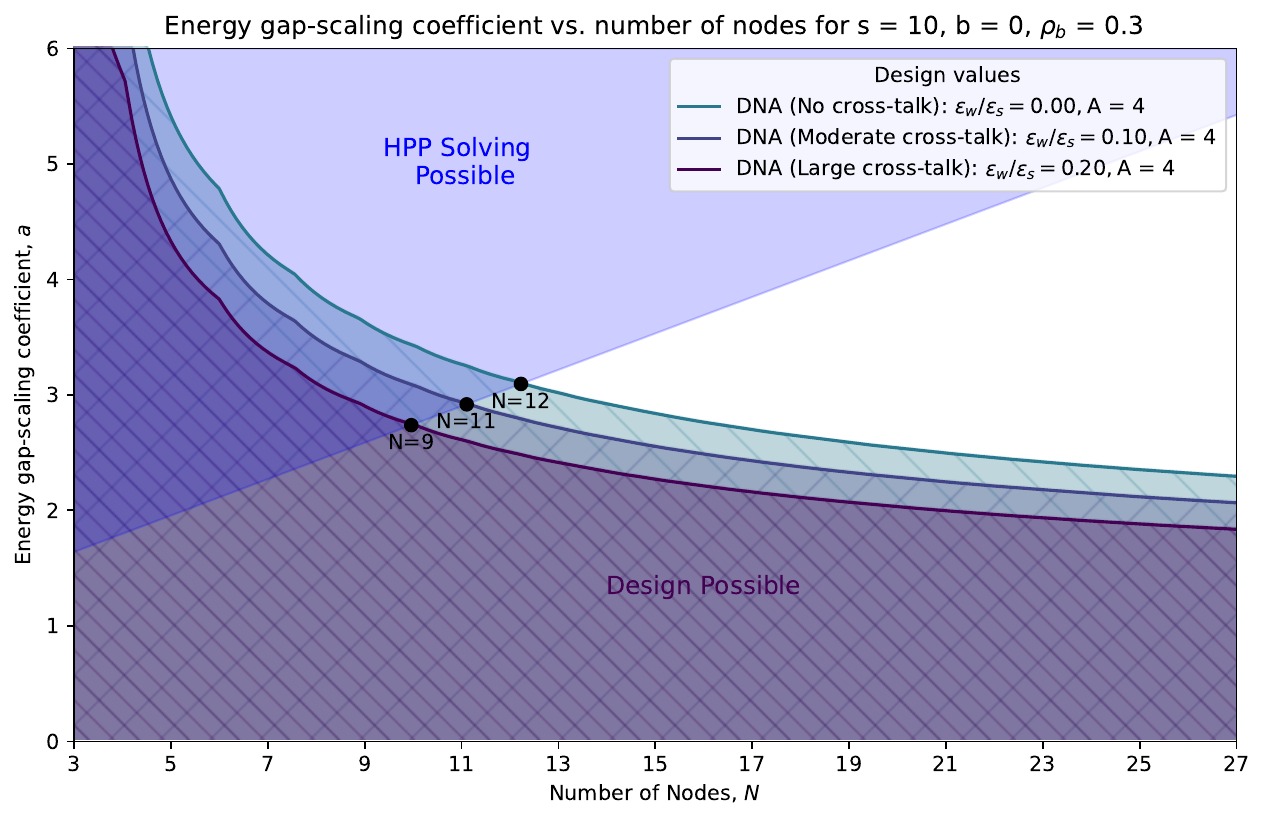}
    \caption{\label{}}
\end{subfigure}
\begin{subfigure}{0.49\textwidth}
    \includegraphics[width=\textwidth]{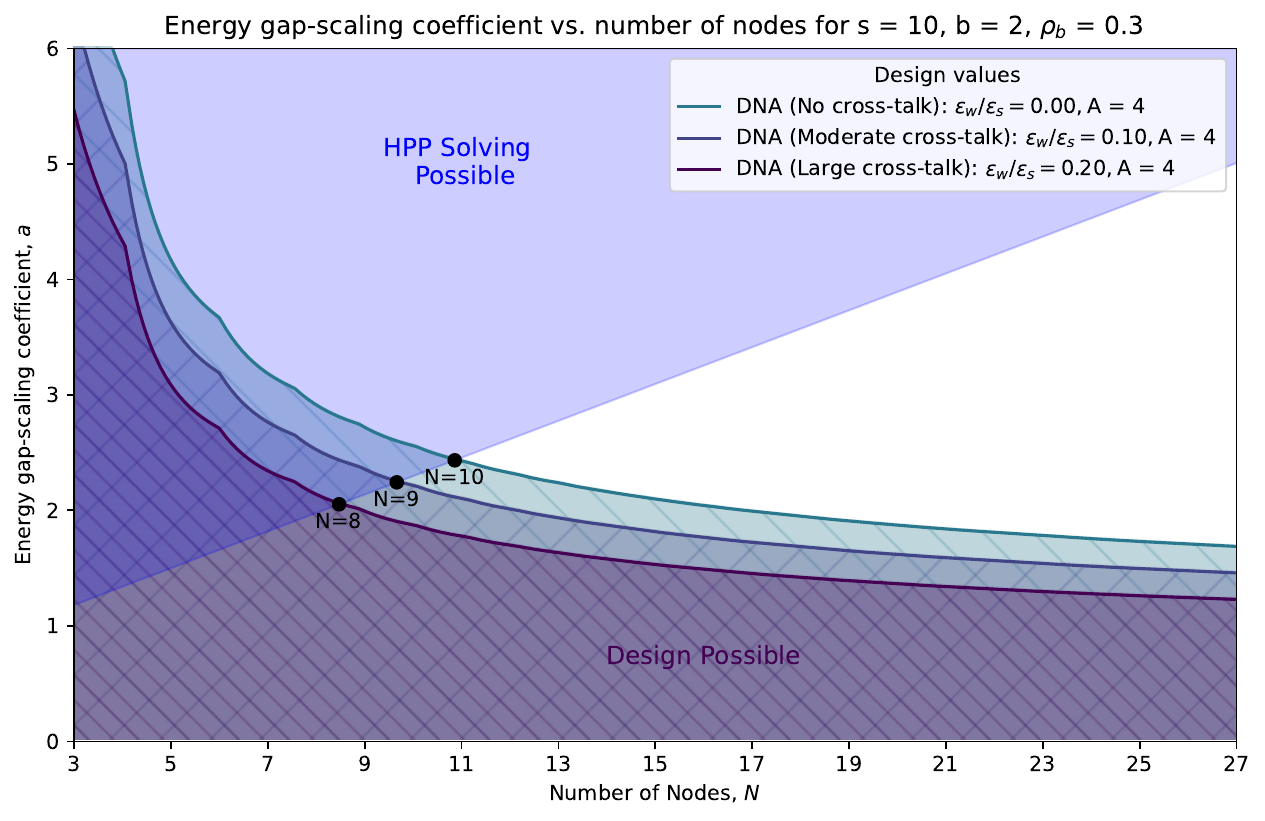}
    \caption{\label{}}
\end{subfigure}
\caption{\label{fig:supp_dna}\textbf{Phase Diagrams for DNA-like system for several $b$ and $\rho_b$ values.} The kinks visible in the design curves, especially for small $N$, are due to the presence of floor division by the alphabet size, $A=4$ in the equation for $\frac{\bar{n}_w}{L}$ in \eqref{eq:nw_expression_final}. Design curves are shown for three example DNA-like systems: (i) No crosstalk ($A=4$, $\epsilon_w/\epsilon_s=0$) (ii) Moderate crosstalk ($A=4$, $\epsilon_w/\epsilon_s=0.1$) (iii) Large crosstalk ($A=4$, $\epsilon_w/\epsilon_s$).}
\end{figure*}


\begin{figure*}[ht]
\centering
\begin{subfigure}{0.49\textwidth}
\includegraphics[width=\textwidth]{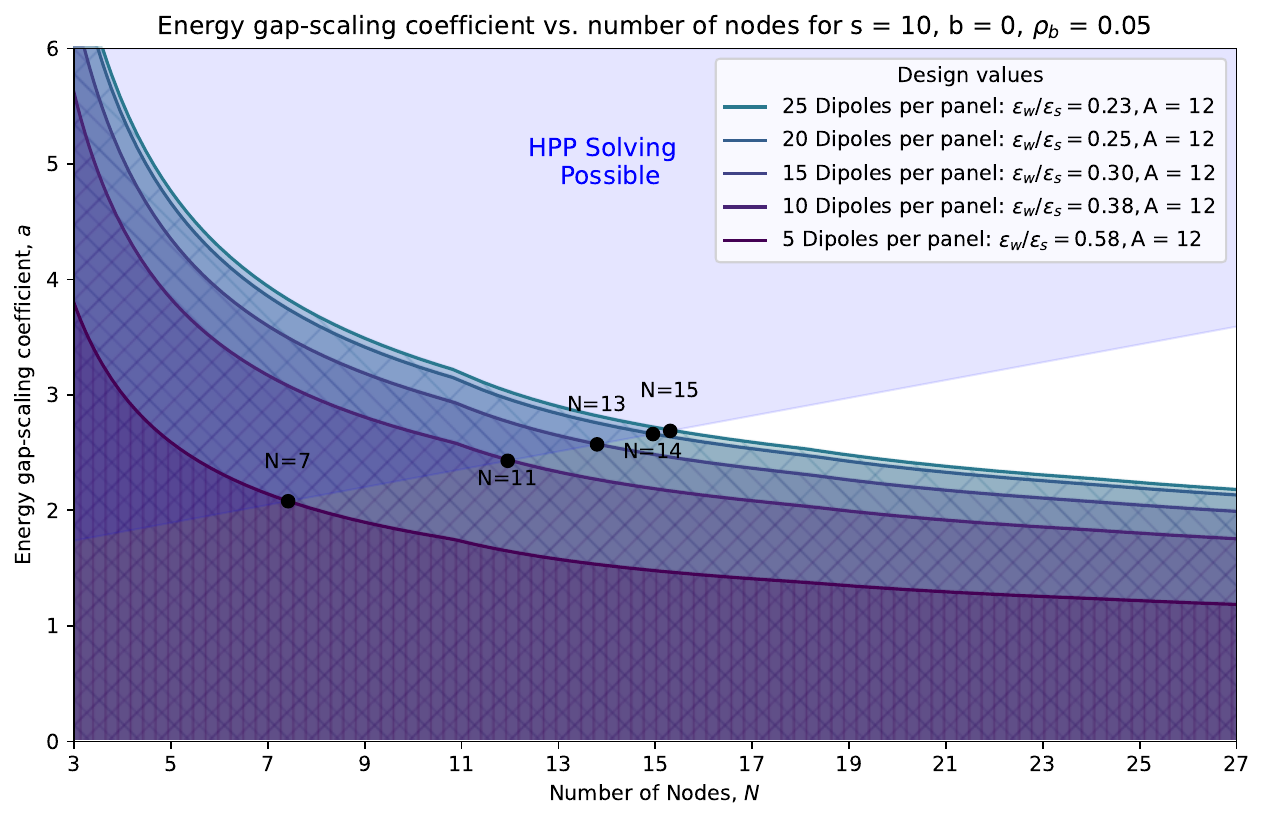}
    \caption{\label{}}
\end{subfigure}
\begin{subfigure}{0.49\textwidth}
\includegraphics[width=\textwidth]{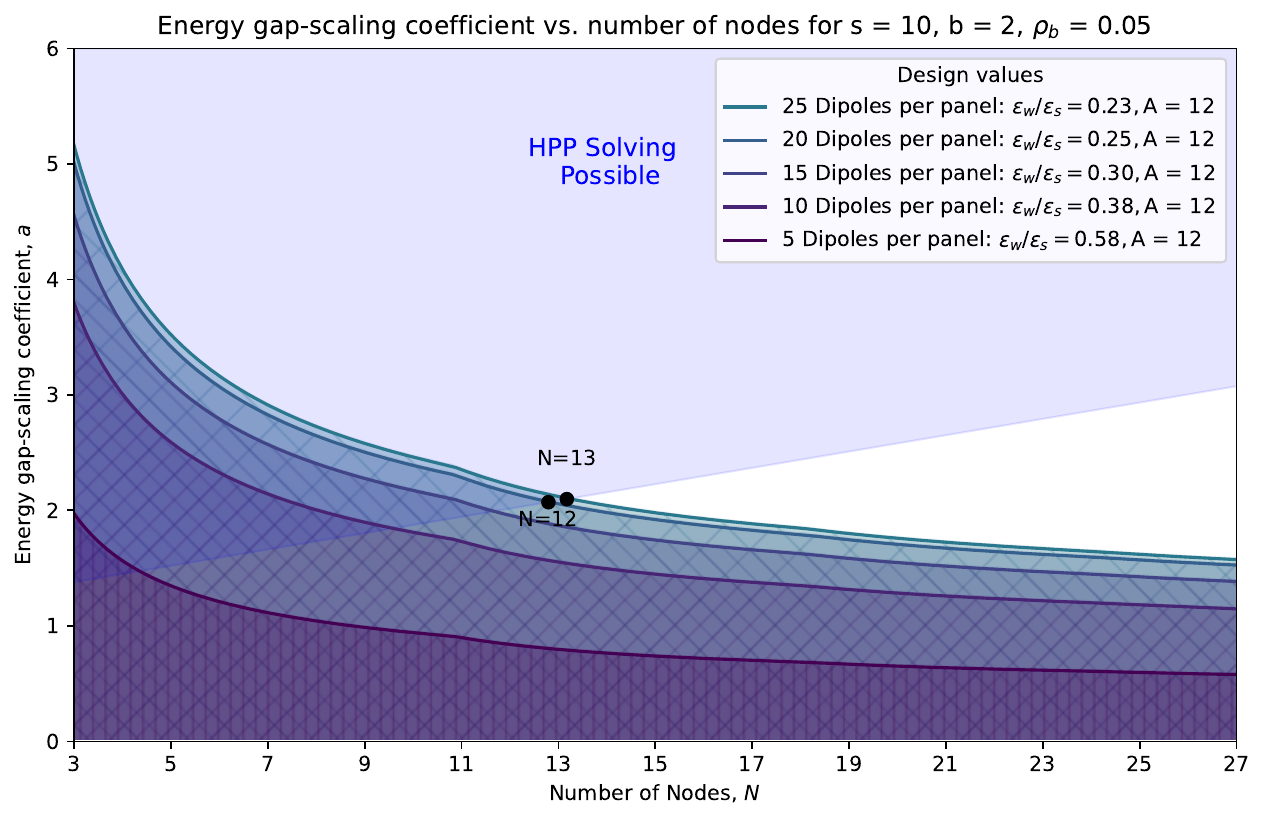}
    \caption{\label{}}
\end{subfigure}
\begin{subfigure}{0.49\textwidth}
    \includegraphics[width=\textwidth]{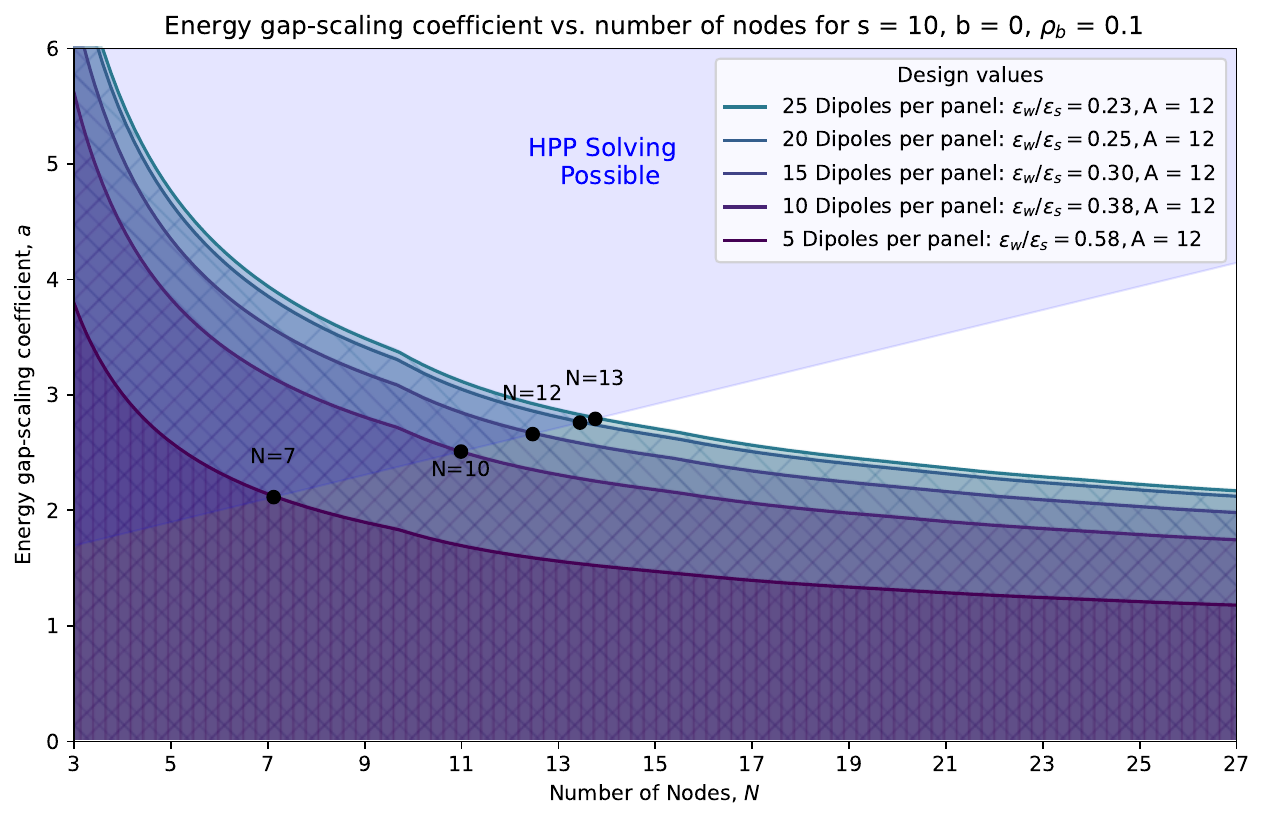}
    \caption{\label{}}
\end{subfigure}
\begin{subfigure}{0.49\textwidth}
    \includegraphics[width=\textwidth]{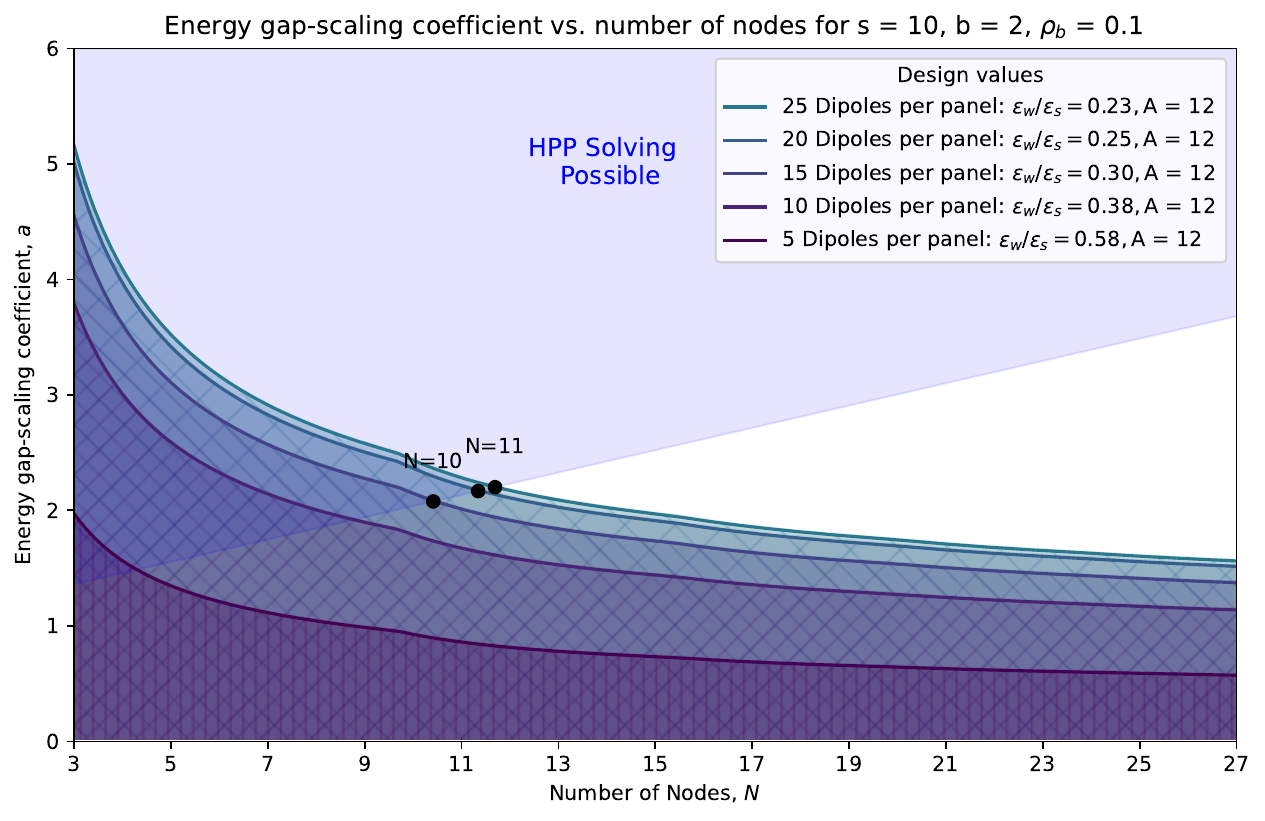}
    \caption{\label{}}
\end{subfigure}
\begin{subfigure}{0.49\textwidth}
    \includegraphics[width=\textwidth]{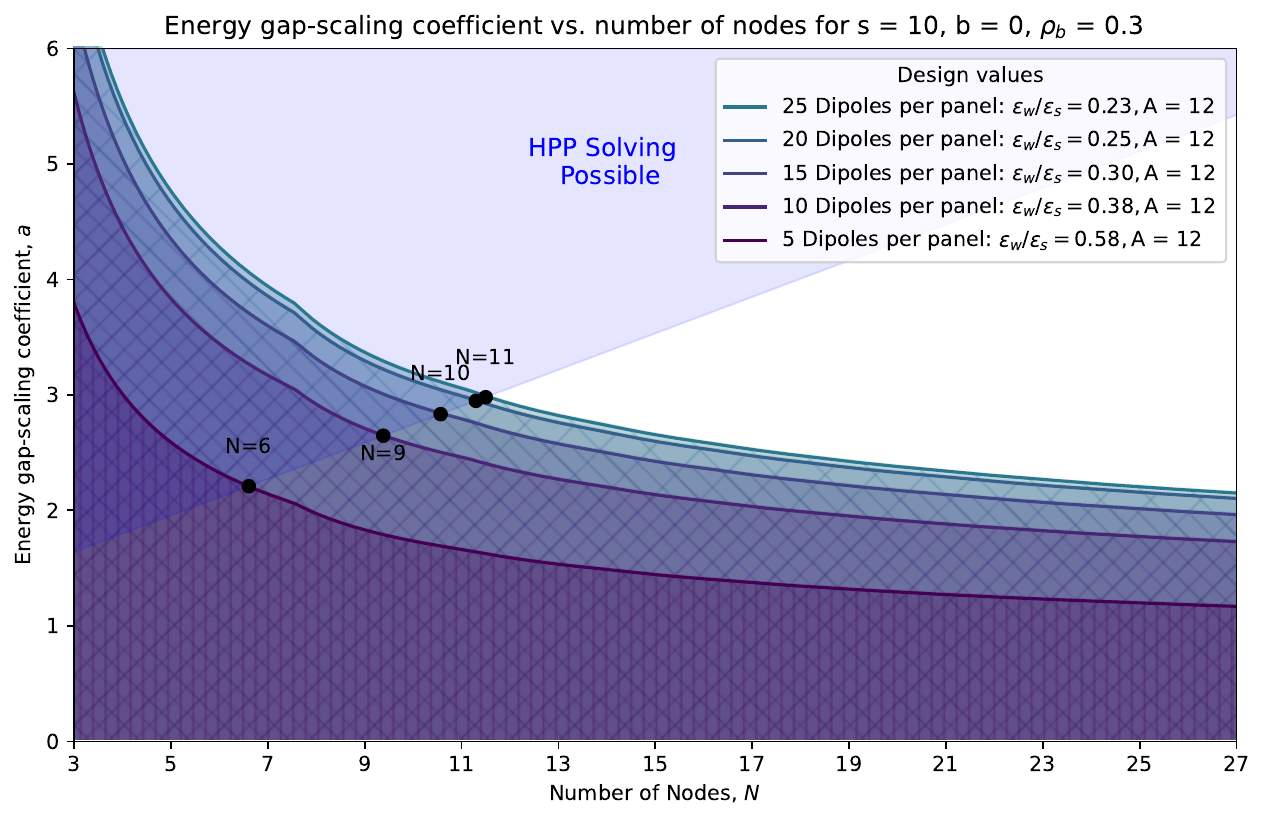}
    \caption{\label{}}
\end{subfigure}
\begin{subfigure}{0.49\textwidth}
    \includegraphics[width=\textwidth]{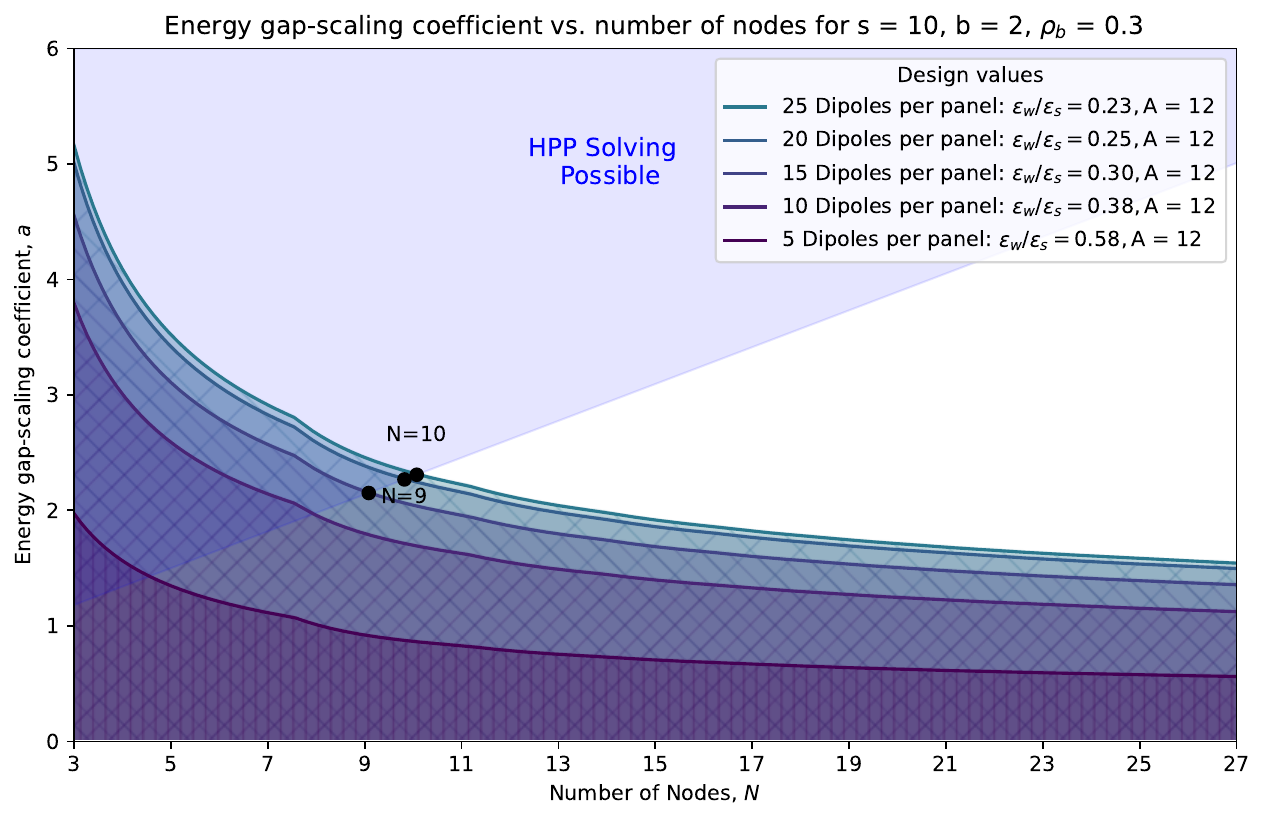}
    \caption{\label{}}
\end{subfigure}
\caption{\label{fig:supp_magnets}\textbf{Phase Diagrams for magnets systems for several $b$ and $\rho_b$ values.} Design curves are shown for five of the magnet systems detailed in \cite{du2022programming}, all with $A=12$ and with $\epsilon_w/\epsilon_s$ values ranging from 0.58 to 0.23.} 

\end{figure*}

    
    
    
    

\begin{table*}[ht]
  \centering
  \caption{\label{tab:HPPexamples} The design parameters and the maximum problem size $N$ that can be attained HPP-solving parameters for several example systems.}
  \begin{tabular}{@{} l c c c c c c @{}}  
    \toprule
    \multicolumn{1}{c}{\textbf{System}} &
    \multicolumn{3}{c}{\textbf{Design}} &
    \multicolumn{3}{c}{\textbf{HPP}}\\
    \cmidrule(r){2-4}\cmidrule(l){5-7}
    & $A$ &
      Mean $\epsilon_w / \epsilon_s$ &
      $ \lvert s \rvert \ [k_B T]$ &
      $b$ &
      $\rho_b$ &
      Max.\ $N$ \\
    \midrule

Ideal, best N & 1000 & 0.00 & 10 & 0 & 0.05 & 22 \\
Ideal, best N & 1000 & 0.00 & 10 & 0 & 0.1 & 19 \\
Ideal, best N & 1000 & 0.00 & 10 & 0 & 0.3 & 15 \\
25-Dipole Magnet Panels & 12 & 0.23 & 10 & 0 & 0.05 & 15 \\
25-Dipole Magnet Panels & 12 & 0.23 & 10 & 2 & 0.05 & 13 \\
25-Dipole Magnet Panels & 12 & 0.23 & 10 & 0 & 0.1 & 13 \\
25-Dipole Magnet Panels & 12 & 0.23 & 10 & 2 & 0.1 & 11 \\
25-Dipole Magnet Panels & 12 & 0.23 & 10 & 0 & 0.3 & 11 \\
25-Dipole Magnet Panels & 12 & 0.23 & 10 & 2 & 0.3 & 10 \\
20-Dipole Magnet Panels & 12 & 0.25 & 10 & 0 & 0.05 & 14 \\
20-Dipole Magnet Panels & 12 & 0.25 & 10 & 2 & 0.05 & 12 \\
20-Dipole Magnet Panels & 12 & 0.25 & 10 & 0 & 0.1 & 13 \\
20-Dipole Magnet Panels & 12 & 0.25 & 10 & 2 & 0.1 & 11 \\
20-Dipole Magnet Panels & 12 & 0.25 & 10 & 0 & 0.3 & 11 \\
20-Dipole Magnet Panels & 12 & 0.25 & 10 & 2 & 0.3 & 9 \\
15-Dipole Magnet Panels & 12 & 0.30 & 10 & 0 & 0.05 & 13 \\
15-Dipole Magnet Panels & 12 & 0.30 & 10 & 2 & 0.05 & 11 \\
15-Dipole Magnet Panels & 12 & 0.30 & 10 & 0 & 0.1 & 12 \\
15-Dipole Magnet Panels & 12 & 0.30 & 10 & 2 & 0.1 & 10 \\
15-Dipole Magnet Panels & 12 & 0.30 & 10 & 0 & 0.3 & 10 \\
15-Dipole Magnet Panels & 12 & 0.30 & 10 & 2 & 0.3 & 9 \\
10-Dipole Magnet Panels & 12 & 0.38 & 10 & 0 & 0.05 & 11 \\
10-Dipole Magnet Panels & 12 & 0.38 & 10 & 0 & 0.1 & 10 \\
10-Dipole Magnet Panels & 12 & 0.38 & 10 & 2 & 0.1 & 8 \\
10-Dipole Magnet Panels & 12 & 0.38 & 10 & 0 & 0.3 & 9 \\
10-Dipole Magnet Panels & 12 & 0.38 & 10 & 2 & 0.3 & 7 \\
5-Dipole Magnet Panels & 12 & 0.58 & 10 & 0 & 0.05 & 7 \\
5-Dipole Magnet Panels & 12 & 0.58 & 10 & 0 & 0.1 & 7 \\
5-Dipole Magnet Panels & 12 & 0.58 & 10 & 0 & 0.3 & 6 \\
DNA (No cross-talk) & 4 & 0.00 & 10 & 0 & 0.05 & 16 \\
DNA (No cross-talk) & 4 & 0.00 & 10 & 2 & 0.05 & 14 \\
DNA (No cross-talk) & 4 & 0.00 & 10 & 0 & 0.1 & 14 \\
DNA (No cross-talk) & 4 & 0.00 & 10 & 2 & 0.1 & 12 \\
DNA (No cross-talk) & 4 & 0.00 & 10 & 0 & 0.3 & 12 \\
DNA (No cross-talk) & 4 & 0.00 & 10 & 2 & 0.3 & 10 \\
DNA (Moderate cross-talk) & 4 & 0.10 & 10 & 0 & 0.05 & 14 \\
DNA (Moderate cross-talk) & 4 & 0.10 & 10 & 2 & 0.05 & 12 \\
DNA (Moderate cross-talk) & 4 & 0.10 & 10 & 0 & 0.1 & 13 \\
DNA (Moderate cross-talk) & 4 & 0.10 & 10 & 2 & 0.1 & 11 \\
DNA (Moderate cross-talk) & 4 & 0.10 & 10 & 0 & 0.3 & 11 \\
DNA (Moderate cross-talk) & 4 & 0.10 & 10 & 2 & 0.3 & 9 \\
DNA (Large cross-talk) & 4 & 0.20 & 10 & 0 & 0.05 & 12 \\
DNA (Large cross-talk) & 4 & 0.20 & 10 & 2 & 0.05 & 10 \\
DNA (Large cross-talk) & 4 & 0.20 & 10 & 0 & 0.1 & 11 \\
DNA (Large cross-talk) & 4 & 0.20 & 10 & 2 & 0.1 & 9 \\
DNA (Large cross-talk) & 4 & 0.20 & 10 & 0 & 0.3 & 9 \\
DNA (Large cross-talk) & 4 & 0.20 & 10 & 2 & 0.3 & 8 \\
Pac man particles & 6 & 0.20 & 10 & 0 & 0.05 & 14 \\
Pac man particles & 6 & 0.20 & 10 & 2 & 0.05 & 12 \\
Pac man particles & 6 & 0.20 & 10 & 0 & 0.1 & 13 \\
Pac man particles & 6 & 0.20 & 10 & 2 & 0.1 & 10 \\
Pac man particles & 6 & 0.20 & 10 & 0 & 0.3 & 10 \\
Pac man particles & 6 & 0.20 & 10 & 2 & 0.3 & 9 \\
    \bottomrule
  \end{tabular}
\end{table*}

\bibliography{bib}